\newtheorem{theorem}{Theorem}
\newtheorem{lemma}{Lemma}
\newtheorem{corollary}{Corollary}
\newtheorem{proposition}{Proposition}
\newtheorem{remark}{Remark}
\numberwithin{lemma}{section}
\numberwithin{proposition}{section}
\numberwithin{remark}{section}
\numberwithin{definition}{section}
\newcommand{\red}{\textcolor{black}}
\title{Pulsation of quantum walk between two arbitrary graphs with weakly connected bridge}
\author{Taisuke Hosaka$^{1}$\thanks{Corresponding Author: hosaka-taisuke-pn@ynu.jp },  Etsuo Segawa$^1$\\
$^1${\small Graduate School of Environment and Information Sciences, Yokohama National University,}\\ 
{\small Hodogaya, Yokohama 240-8501, Japan,
}\\}
\date{}
\begin{document}
\maketitle
\centerline{\textbf{Abstract}}
We consider the Grover walk on a finite graph composed of two arbitrary simple graphs connected by one edge, referred to as a bridge. 
The parameter $\epsilon>0$ assigned at the bridge represents the strength of connectivity: if $\epsilon=0$, then the graph is completely separated. We show that for sufficiently small values of $\epsilon$, a phenomenon called pulsation occurs. The pulsation is characterized by the periodic transfer of the quantum walker between the two graphs. An asymptotic expression with respect to small $\epsilon$ for the probability of finding the walker on either of the two graphs is derived. This expression reveals that the pulsation depends solely on the number of edges in each graph, regardless of their detailed adjacency information or the position of the bridge between them. In addition, we obtain that the quantum walker is transferred periodically between the two graphs, with a period of order $O(\epsilon^{-1/2})$. Furthermore, when the number of edges of two graphs is equal, the quantum walker is almost completely transferred.


\section{Introduction}
    Quantum walks play key roles in the quantum computing like quantum algorithms \cite{Aed,Am1,G}, quantum simulation \cite{QS}, and quantum cryptography \cite{QC}.
    Such effectiveness of application of quantum walks is based on distinctive properties compared to classical random walks, such as localization \cite{IS,IKS,ILS}, ballistic spreading \cite{K} and periodicity \cite{PB, KSTY,HKSSp}.
    One of the main topics in the study of quantum walks is quantum search algorithms \cite{Portugal} and perfect state transfer \cite{Gpst1}. Spatial search algorithms based on quantum walks aim to locate the marked vertex with high probability, providing quadratic speed-up over classical search algorithms on some graphs \cite{ADZ, QCcoupling2}. Perfect state transfer refers to the phenomenon in which a quantum walker is transferred from one position to another with probability one \cite{Gpst2, KS, KY}. Characterizing graphs that allow perfect state transfer remains an important research direction. Also, without being obsessed with perfectly transferring the quantum walker, some studies have shown that a sufficient amount of quantum walkers transfer to one position, such as the antipodal point or self-loop \cite{SSstate}.
    
    Both of these topics share a common goal: concentrating the quantum walker at a specific positions. Motivated by this, we investigate a related phenomenon called pulsation, where the quantum walker repeatedly transfers between two regions of the graph \cite{TS}. This behavior may be seen as a generalization of both spatial search algorithms and perfect state transfer. In this work, we aim to understand when pulsation occurs and to clarify the key factors that cause it. We hope that this pulsation can be added as one of the distinctive properties of quantum walks.

    Our study treats the Grover walk on the finite graph constructed by two graphs connected by an edge, referred to as {\it a bridge}.
    We put a weight parametrized by $\epsilon>0$ on a bridge, while all edges except a bridge are put weight $1$. The parameter $\epsilon$ is considered as the strength of connectivity. If $\epsilon=1$, standard Grover walk on the graphs is reproduced. While if $\epsilon=0$, effectively the bridge disappears, and the graph is decomposed into the three disconnected parts: the two subgraphs and the bridge edge itself.
    This means that the smaller the value of $\epsilon$, the weaker the connectivity between two graphs in our model. Then, intuitively, if $\epsilon$ is close to 0, it would be expected that the quantum walker would have difficulty transferring between the two graphs, \red{either remaining trapped within one side of the graph or requiring an excessively long time to reach the other.
    Indeed,
    for correspondence classical random walk, it takes a large amount of time to converge to the stationary distribution, which is proportional to $\epsilon^{-1}$}.
    However, we observe a counterintuitive phenomenon: for sufficiently small $\epsilon$, the quantum walker goes back and forth between subgraphs via {\it a \red{weak} bridge} \red{on a relatively short timescale which is proportional to $\epsilon^{-1/2}$}, that is, the pulsation occurs in these settings \red{with the  period $O(\epsilon^{-1/2})$}. 
    
    \red{
    In the previous paper \cite{TS}, we demonstrated the existence of the pulsation on the graph identifying a fixed vertex of the Johnson graph with the center vertex of a star graph. Consequently, the class of the graphs we considered was limited.
    \red{
    On the other hand, in this work, by introducing a weak bridge between two arbitrary graphs, we calculate their pulsation response (see Theorem~\ref{thm: prob}). This yields a more comprehensive result.}
    }
    
    We obtain an asymptotic behavior of the finding probability of the quantum walker in either graph with respect to small $\epsilon$, presented in Theorem \ref{thm: prob} (see Section 4). Theorem \ref{thm: prob} implies that the behavior of the quantum walker determined solely by the number of edges in each graph and is independent of their \red{adjacency information or the placement of the bridge between them}. Theorem \ref{thm: periodicity} gives the periodicity of the behavior of the quantum walk of order $O(\epsilon^{-1/2})$. 
    This work provides a new feature of quantum walks called pulsation and we hope that it contributes to a deeper understanding of quantum transport phenomena.
    
    This paper is organized as follows. In Section $2$, we proposed the settings of the graphs and our quantum walk model. Section $3$ shows some numerical simulations. In Section $4$, we give our main results. In Section $5$, we address the proof of the main theorem. Section $6$ summarizes our results and discusses future work.
\section{Setting the model}
 Let $H_j=(V_j, A_j) \,(j=1,2)$ be a simple connected graph. Here, $V_j$ and $A_j$ are a set of vertices and symmetric arcs, respectively. The origin and terminal vertex of an arc $a$ are indicated by $o(a)$ and $t(a)$. The inverse arcs of an arc $a$ are denoted by $\bar{a}$. Note that $t(a)=o(\bar{a})$ and $o(a)=t(\bar{a})$ hold. 
 For a fixed vertex $\xi_j \in V_j$, we define $\bold{e}_{*}=\{(\xi_1, \xi_2), (\xi_2, \xi_1)\}$. It is called {\it the bridge} between $H_1$ and $H_2$.
 We set a graph $G=(V,A)$ where $V=V_1 \cup V_2, A=A_1 \cup A_2 \cup \bold{e_*}$. Let $\delta V=\{\xi_1, \xi_2\} \subset V$ be the boundary vertex set.
The weight function $w: {A} \rightarrow [0, 1]$ is defined by
\begin{align}
    \label{eq: weight}
    w(a)=
    \begin{cases}
        1 &: a \notin \bold{e}_*, \\
        \epsilon &: a \in \bold{e}_*.
    \end{cases}
\end{align}
We set a function $p_{\epsilon}: {A} \rightarrow [0, 1]$ as $p_{\epsilon}(a)=w(a)/m(o(a))$,
where $m(x)= \sum_{a:\, o(a)=x} w(a)$.
We should remark that $p_{\epsilon}(a)$ is regarded as the probability of a random walker from a vertex $o(a)$ to $t(a)$ depending on the weight.
Combining with a definition of $p_{\epsilon}$ and Eq.\,(\ref{eq: weight}), it follows that 
 \begin{align}\label{eq:pe}
     p_{\epsilon}(a)=
     \begin{cases}
         1/\deg(o(a)) &: o(a) \notin \delta V, \\
         1/(\deg(o(a))+\epsilon) &: o(a) \in \delta V, a \notin \bold{e}_*, \\
         \epsilon/(\deg(o(a))+\epsilon) &: a \in \bold{e}_*.
     \end{cases}
 \end{align}
Here, $\deg(x)$ is the degree given by
\begin{align*}
    \deg (x)=\left|\{a \in A_{1} \cup A_{2}\,|\, t(a) =x \}\right|
\end{align*}
for $x \in V$. Note that the bridge is not counted as a degree.
The boundary matrix $d_{\epsilon}: \mathbb{C}^{A} \rightarrow \mathbb{C}^{V}$ is defined by
\begin{align*}
    (d_{\epsilon})_{x, a}=
    \begin{cases}
        \sqrt{p_{\epsilon}(\bar{a})} &: t(a)=x, \\
        0 &: otherwise.
    \end{cases}
\end{align*}
for any $a \in A, x \in V$.
The shift matrix $S: \mathbb{C}^{A}\rightarrow \mathbb{C}^{A}$ is defined by
\begin{align*}
    (S)_{b,a}=\delta_{a,\bar{b}}.
\end{align*}
Here, $\delta_{x,y}$ is the Kronecker delta.
The time evolution matrix $U(\epsilon) : \mathbb{C}^{A}\rightarrow \mathbb{C}^{A}$ is
\begin{align*}
U(\epsilon)=S(2d_{\epsilon}^{*}d_{\epsilon}-I)
\end{align*}
\red{We note that if $\epsilon=0$, the time evolution is completely divided into the Grover walks on the disjoint graphs. Here, the time evolution of the Grover walk is given by
\[ U(0)|a\rangle=\frac{2}{\mathrm{deg}(t(a))}\sum_{b\in A:o(b)=t(a)}|b\rangle - |\bar{a}\rangle.\;\;(a\in A)  \]
This means that the time evolution operator of the Grover walk is uniquely determined by  only the underlying graph, that is, if an incident wave to a vertex $x$ transmits to the other directions, the associated weight is $2/\mathrm{deg}(x)$, if it reflects to the opposite direction, then the associated weight is $2/\mathrm{deg}(x)-1$, which is independent of the labeling of arcs. 
}
\red{
While the definition of the walk can naturally be extended by, for example,  varying the weight function $w$, this extended edge weight $w$, which depends on the labeling of the arc of course, is added to the information required to determine the time evolution.
\red{In this paper, we focus on only the graph structures as the factor of the pulsation for the first step and clarify its fundamental properties using  the Grover based coin.}
}
\begin{remark}
    If $\epsilon=0$, $U(\epsilon)$ acts on each of the three divided graphs $H_1$, $H_2$ and $\bold{e}_{*}$. While if $\epsilon=1$, $U(\epsilon)$ is regarded as Grover walks on $G$.
\end{remark}
This remark implies that the smaller $\epsilon$ is, the weaker the connection of $\bold{e}_{*}$.
In this paper, we focus on the case with sufficiently small $\epsilon$.
\red{Let $\ket{\psi_{t}}$ be the $t$-th iteration of the time evolution matrix for \red{the} initial state $\ket{\psi_{0}}$ \red{($||\psi_0||=1$)}, that is, 
\begin{align*}
    \ket{\psi_{t+1}}=U(\epsilon)\ket{\psi_{t}}=(U(\epsilon))^{t+1}\ket{\psi_{0}}.
\end{align*}
}
\red{The} initial state $\ket{\psi_0}$ is defined as the uniform superposition on $H_1$, that is,
\begin{align}
    \label{eq: initial state}
    \ket{\psi_0}=\frac{1}{\sqrt{|A_1|}}\sum_{a\in A_1}\ket{a}.
\end{align}
Let $\mu_{t}(H_j)$ be the probability of the existence of a quantum walker on $H_j\,(j=0,1,2)$ at time step $t$, denoted by
\begin{align}\label{eq:mut}
    \mu_{t}(H_j)=\sum_{a\in A_j}\left|\bra{a}U(\epsilon)^t\ket{\psi_0}\right|^2.
\end{align}
Here, we set $H_0=(\delta V, \bold{e}_*)$.
We should remark that it follows that
\begin{align*}
    \mu_{t}(H_0)=1-\mu_{t}(H_1)-\mu_{t}(H_2).
\end{align*}
 Thus, our main purpose is to estimate the asymptotic expression of $\mu_{t}(H_1)$ and $\mu_t(H_2)$ with respect to the parameter $\epsilon \ll1$, which is the strength of the connectivity between $H_{1}$ and $H_{2}$.
\section{Demonstration}


\red{
We are interested in the differences of the behavior between the random walk and quantum walk. 
It is well known in the probability theory that if the underlying graph is not bipartite, the corresponding random walk converges to the stationary distribution $\pi:X\to [0,1]$:  
\[ \pi(x)=\frac{\sum_{t(a)=x}w(a)}{\sum_{a\in A}w(a)}. \]
\red{This implies that the probability of finding a random walker in graph $H_j$ $(j=1,2)$ at time $t$, $\mu_t^{RW}(H_j)$, converges to 
\[ \lim_{t\to^\infty}\mu_t^{RW}(H_j)=\frac{\sum_{a\in A_j}w(a)}{\sum_{a\in A}w(a)}\;\;(j=1,2).  \]
}  
Then, we want to know what influences the behavior of the corresponding quantum walk by observing  $\mu_t(H_j)\; (j=1,2)$, such as the structure of the graph.
Hence, this section shows the numerical simulation in some cases.
Figure \ref{fig: k5_k5} shows the cases of random walk and quantum walk with $H_1=H_2=K_5$. Here, $K_{n}$ is the complete graph whose number of vertices is $n$. 
Figure \ref{fig: k5_k5} (a) implies that the random walk transfers the walker from $H_{1}$ to $H_{2}$ with only around half of the walkers being transported in a long time step (approximately 6,000 steps). See Remark~\ref{rem:rwconverge} for a rough estimation for its convergence time.  
In contrast, figure \ref{fig: k5_k5} (b) shows that the quantum walk can transport almost all walkers to the opposite side in a very short time (approximately 70 steps). See Theorem~\ref{thm: periodicity} for an estimation of its period. This demonstrates that random and quantum walks behave significantly differently.
}
Next, we consider the case where $H_1$ and $H_2$ are different.
The case of $H_1=K_5, H_2=K_3$ and reverse is illustrated in Figure \ref{fig: k5_k3 or k3_k_5}.
For both cases, the behavior of $\mu_t(H_2)$ remains unchanged, while that of $\mu_t(H_1)$ has changed significantly.
In particular, Figure \ref{fig: k5_k3 or k3_k_5} (b) says that there exists a time step when $\mu_t(H_1)\sim 0$.
As you can see, the behaviors of $\mu_t(H_j)$ vary depending on the graph.
We were able to obtain these asymptotically behaviors of $\mu_t(H_j)$ without \red{detailed adjacency information of each graph}, see the next section.
\begin{figure}[]
  \centering
  \begin{minipage}{0.43\columnwidth}
    \centering
    \includegraphics[width=\columnwidth]{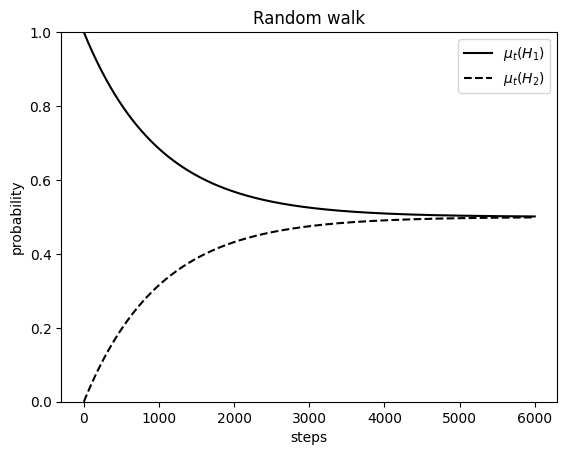}
    \subcaption{\red{Random walk case}}
  \end{minipage}
  \hspace{5mm}
  \begin{minipage}{0.43\columnwidth}
    \centering
    \includegraphics[width=\columnwidth]{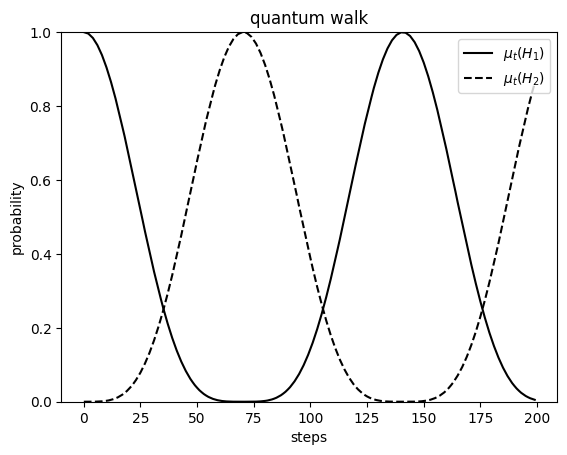}
    \subcaption{Quantum walk case}
  \end{minipage}
  \caption{\red{The horizontal and vertical axes are the time steps $t$ and the finding probability in each $H_j$ ($j=1,2$), respectively}. The solid and doted curves correspond to the finding probability on $\mu_t(H_1)$ and $\mu_t(H_2)$ where $H_{1}=H_{2}=K_{5}$ with $\epsilon=0.01$. (a) and (b) shows the random walk and quantum walk case, respectively}
  \label{fig: k5_k5}
\end{figure}

\begin{figure}[]
  \centering
  \begin{minipage}{0.43\columnwidth}
    \centering
    \includegraphics[width=\columnwidth]{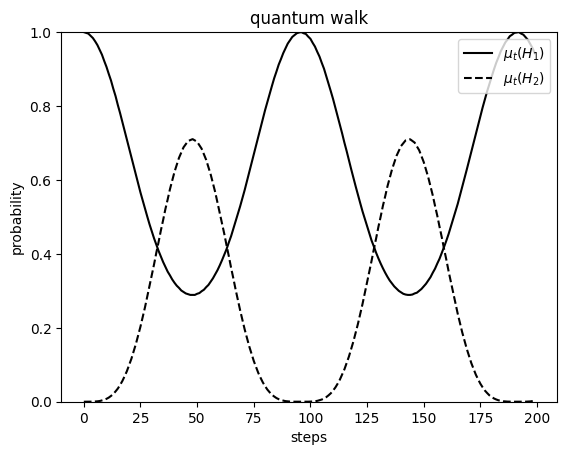}
    \subcaption{$H_1=K_5, H_2=K_3$}
  \end{minipage}
  \hspace{5mm}
  \begin{minipage}{0.43\columnwidth}
    \centering
    \includegraphics[width=\columnwidth]{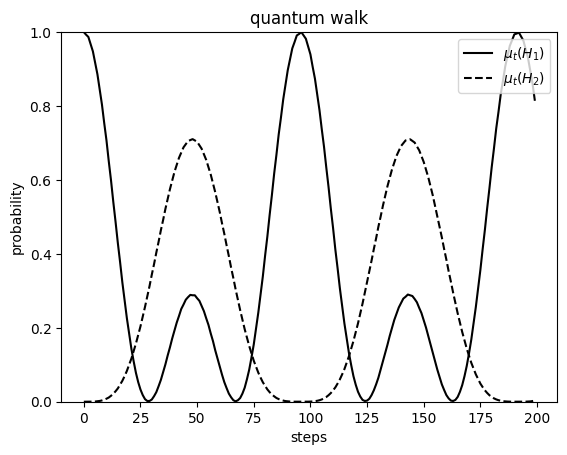}
    \subcaption{$H_1=K_3, H_2=K_5$}
  \end{minipage}
  \caption{\red{ The horizontal and vertical axes are the time steps $t$ and the finding probability in each $H_j$ ($j=1,2$), respectively.} The solid and doted curves correspond to the finding probability on $\mu_t(H_1)$ and $\mu_t(H_2)$ with $\epsilon=0.01$, respectively. }
  \label{fig: k5_k3 or k3_k_5}
\end{figure}


\section{Main Theorem}
This section shows the main theorems mentioned in the previous section.
\begin{theorem}
\label{thm: prob}
\red{For any connected graph $H_j$ ($j=1,2$), let $\mu_t(H_j)$ be defined as (\ref{eq:mut}). }
\red{Then,} for sufficiently small \red{bridge parameter between $H_1$ and $H_2$}, $\epsilon$, it \red{follows} that
    \begin{align*}
        &\mu_{t}(H_1)= \left( \frac{|A_1|+|A_{2}| \cos{(t\theta(\epsilon))}}{|A_1|+|A_2|}\right)^2 +O(\epsilon), \\
        &\mu_{t}(H_2)= \left( \frac{\sqrt{|A_1||A_2|}}{|A_1|+|A_2|}(1-\cos(t\theta(\epsilon)))\right)^2 +O(\epsilon).
    \end{align*}
    where $\theta(\epsilon)$ is the declination of the eigenvalue of  $U(\epsilon)$ that has the largest real part except $1$ given by
    \begin{align*}
        \cos \theta(\epsilon)=1-\left(\frac{1}{|A_1|}+\frac{1}{|A_2|}\right)\epsilon+O(\epsilon^2).
    \end{align*}
\end{theorem}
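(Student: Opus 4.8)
\emph{Proof proposal.} The plan is to diagonalize $U(\epsilon)$ through the spectral mapping theorem for Grover-type walks and isolate the three eigenvalues lying within $O(\epsilon)$ of $1$. Set the discriminant $T(\epsilon)=d_\epsilon S d_\epsilon^{*}\colon\mathbb{C}^{V}\to\mathbb{C}^{V}$, a self-adjoint operator whose $(y,x)$-entry is $w(b)/\sqrt{m(x)m(y)}$ for the arc $b$ from $x$ to $y$; thus $T(\epsilon)$ is the symmetrization of the $p_\epsilon$-transition matrix. Using $d_\epsilon d_\epsilon^{*}=I$, the spectral mapping theorem gives: if $T(\epsilon)f=\cos\lambda\,f$ with $\lambda\in(0,\pi)$ then $d_\epsilon^{*}f-e^{\mp i\lambda}Sd_\epsilon^{*}f$ is an eigenvector of $U(\epsilon)$ for $e^{\pm i\lambda}$, while $T(\epsilon)f=f$ lifts to $U(\epsilon)(d_\epsilon^{*}f)=d_\epsilon^{*}f$. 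So it suffices to understand the spectrum of $T(\epsilon)$ near $1$, lift it, and resolve $\ket{\psi_{0}}$ in the resulting eigenbasis.

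At $\epsilon=0$ the discriminant decouples, $T(0)=T_{1}\oplus T_{2}$ with $T_{j}$ the symmetrized transition matrix of the connected graph $H_{j}$; hence $1$ is an eigenvalue of $T(0)$ of multiplicity exactly two, spanned by the Perron vectors $\ket{u_{j}}=|A_{j}|^{-1/2}\sum_{x\in V_{j}}\sqrt{\deg_{H_{j}}(x)}\,\ket{x}$. A short computation gives $d_{0}^{*}\ket{u_{j}}=|A_{j}|^{-1/2}\sum_{a\in A_{j}}\ket{a}=\ket{\psi_{0}^{(j)}}$ (with $\psi_{0}^{(1)}=\psi_{0}$), so each uniform state sits in $d_{0}^{*}\,\mathrm{span}\{u_{1},u_{2}\}$, lifted from the degenerate eigenspace. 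I would then run real-analytic degenerate perturbation theory for $T(\epsilon)$ on $\mathrm{span}\{u_{1},u_{2}\}$: the first-order matrix $\bigl(\langle u_{i}|\,T'(0)\,|u_{j}\rangle\bigr)$ equals $\bigl(\begin{smallmatrix}-1/|A_{1}| & 1/\sqrt{|A_{1}||A_{2}|}\\[2pt] 1/\sqrt{|A_{1}||A_{2}|} & -1/|A_{2}|\end{smallmatrix}\bigr)$, which has determinant $0$ and trace $-(\tfrac1{|A_{1}|}+\tfrac1{|A_{2}|})$; thus the two eigenvalues of $T(\epsilon)$ near $1$ are $1$ (eigenvector $w_{0}(\epsilon)$, $w_{0}\propto\sqrt{|A_{1}|}\,u_{1}+\sqrt{|A_{2}|}\,u_{2}$, the Perron vector of $G$) and $\cos\theta(\epsilon)=1-(\tfrac1{|A_{1}|}+\tfrac1{|A_{2}|})\epsilon+O(\epsilon^{2})$ (eigenvector $w_{1}(\epsilon)$, $w_{1}\propto\sqrt{|A_{2}|}\,u_{1}-\sqrt{|A_{1}|}\,u_{2}$). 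Lifting, $U(\epsilon)$ has the stationary eigenvector $\hat\Psi_{0}=d_\epsilon^{*}w_{0}(\epsilon)$ and a conjugate pair $\hat v_{\pm}(\epsilon)$ for $e^{\pm i\theta(\epsilon)}$; this already proves the stated formula for $\cos\theta(\epsilon)$ (all other eigenvalues of $U(\epsilon)$ have real part bounded away from $1$ by the spectral gaps of $H_{1},H_{2}$).

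Next expand $\ket{\psi_{0}}=\tfrac{\sqrt{|A_{1}|}}{\sqrt{|A_{1}|+|A_{2}|}}\,d_{0}^{*}w_{0}+\tfrac{\sqrt{|A_{2}|}}{\sqrt{|A_{1}|+|A_{2}|}}\,d_{0}^{*}w_{1}$ in the eigenbasis of $U(\epsilon)$: with $\alpha_{0}=\langle\hat\Psi_{0}|\psi_{0}\rangle$, $\alpha_{\pm}=\langle\hat v_{\pm}|\psi_{0}\rangle$ one finds $\alpha_{0}=\tfrac{\sqrt{|A_{1}|}}{\sqrt{|A_{1}|+|A_{2}|}}+O(\epsilon)$, $\alpha_{\pm}=\mp\tfrac{i}{\sqrt2}\tfrac{\sqrt{|A_{2}|}}{\sqrt{|A_{1}|+|A_{2}|}}+O(\sqrt\epsilon)$, and $R:=\ket{\psi_{0}}-\alpha_{0}\hat\Psi_{0}-\alpha_{+}\hat v_{+}-\alpha_{-}\hat v_{-}$ has $\|R\|=O(\sqrt\epsilon)$. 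Applying $U(\epsilon)^{t}$ and using $\theta(\epsilon)=O(\sqrt\epsilon)$, the surviving part is $\alpha_{0}\hat\Psi_{0}+\alpha_{+}e^{it\theta}\hat v_{+}+\alpha_{-}e^{-it\theta}\hat v_{-}$, which carries the $\cos(t\theta(\epsilon))$-dependence, plus $U(\epsilon)^{t}R$. Projecting onto $\mathbb{C}^{A_{j}}$ and using $P_{j}d_{0}^{*}w_{0}=\tfrac{\sqrt{|A_{j}|}}{\sqrt{|A_{1}|+|A_{2}|}}\ket{\psi_{0}^{(j)}}$ and the analogous identity for $w_{1}$, the $\hat\Psi_{0},\hat v_{\pm}$ part collapses (after the cancellations below) to a scalar multiple of $\ket{\psi_{0}^{(j)}}$, the scalar being $\tfrac{|A_{1}|+|A_{2}|\cos(t\theta)}{|A_{1}|+|A_{2}|}$ for $j=1$ and $\tfrac{\sqrt{|A_{1}||A_{2}|}}{|A_{1}|+|A_{2}|}\bigl(1-\cos(t\theta)\bigr)$ for $j=2$; taking squared norms gives the main terms.

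The hard part is showing the error is $O(\epsilon)$, not merely $O(\sqrt\epsilon)$, uniformly for $t$ up to order $\epsilon^{-1/2}$, which rests on three facts. (i) The $O(\sqrt\epsilon)$ discrepancies between $d_{0}^{*}w_{0}$ and $\hat\Psi_{0}$, and between the bridge component of $\hat v_{\pm}$ and its $\epsilon=0$ limit, are supported on the two bridge arcs (they come from the entries $\sqrt{\epsilon/(\deg\xi_{j}+\epsilon)}$ of $d_\epsilon^{*}\ket{\xi_{j}}$ and from $\sin\theta(\epsilon)\asymp\sqrt\epsilon$ in the denominator of the lifting formula), hence are killed by $P_{1}$ and $P_{2}$. (ii) The residual $O(\sqrt\epsilon)$ corrections to $P_{j}\hat v_{\pm}$ that do live on $A_{j}$ are identical for $\hat v_{+}$ and $\hat v_{-}$, and they cancel against the $O(\sqrt\epsilon)$ corrections in $\alpha_{\pm}$ so that the $\ket{\psi_{0}^{(j)}}$-proportional piece is exact to $O(\epsilon)$ and the leftover is proportional to $P_{j}(\tau-S\tau)$ for an explicit $\tau$; this leftover has norm $O(\sqrt\epsilon)$ but is orthogonal to $\ket{\psi_{0}^{(j)}}$ because $S\ket{\psi_{0}^{(j)}}=\ket{\psi_{0}^{(j)}}$ and $S=S^{*}$, so its cross term in $\|P_{j}U(\epsilon)^{t}\psi_{0}\|^{2}$ is $O(\epsilon)$. (iii) The remainder $R$ is orthogonal to $\hat\Psi_{0},\hat v_{+},\hat v_{-}$; since $\ket{\psi_{0}^{(j)}}$ lies in $d_{0}^{*}\,\mathrm{span}\{w_{0},w_{1}\}$ too, the cross terms between $R$ and the surviving eigenvectors vanish and $|\langle\psi_{0}^{(j)}|U(\epsilon)^{t}R\rangle|$ reduces to a self-overlap bounded by $O(\|R\|^{2})=O(\epsilon)$, uniformly in $t$. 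Combining (i)–(iii), every cross term and squared-error term in $\|P_{j}U(\epsilon)^{t}\psi_{0}\|^{2}$ is $O(\epsilon)$. I expect step (ii) — simultaneously tracking the order-$\sqrt\epsilon$ perturbations of the eigenvectors and of the overlaps and verifying the cancellation — to be the main obstacle; the essential non-degeneracy input throughout is that $H_{1}$ and $H_{2}$ are connected, so that $1$ is a double eigenvalue of $T(0)$ and only $|A_{1}|,|A_{2}|$ survive in the limit.
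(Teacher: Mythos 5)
Your proposal is correct and follows essentially the same route as the paper: degenerate perturbation theory on the (symmetrized) transition matrix near its double eigenvalue $1$, the spectral mapping theorem to lift the eigenpairs $1$ and $\cos\theta(\epsilon)$ to $1$ and $e^{\pm i\theta(\epsilon)}$ of $U(\epsilon)$, and expansion of $\ket{\psi_0}$ in the three surviving eigenvectors — the paper merely works with $W(\epsilon)$ and conjugates by $D^{1/2}$ where you work with $T(\epsilon)=d_\epsilon S d_\epsilon^{*}$ directly, and your first-order $2\times 2$ matrix reproduces the paper's $\tilde{W}^{(1)}|_{\mathrm{Ran}\,\Pi}$ exactly. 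Your steps (i)--(iii) on promoting the error from $O(\sqrt{\epsilon})$ to $O(\epsilon)$ are in fact more careful than the paper, which absorbs these issues into unexamined $O(\epsilon)$ symbols.
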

Here, 
Theorem \ref{thm: prob} implies that the pulsation depends only on the number of arcs, \red{which is independent of detailed adjacency information of each graph and also the connected place of the bridge between them}. Thus, it can be considered that this phenomenon is a universal property of \red{Grover walk} on finite graphs.
    
\begin{theorem}
    \label{thm: periodicity}
    Let $\tau(\epsilon)$ be the time step when $\mu_{t}(H_2)$ is maximized for the first time.
    For sufficiently small $\epsilon$, it \red{holds} that
    \begin{align*}
        \tau(\epsilon)=\left\lfloor\frac{\pi}{\sqrt{2}}\sqrt{R_{\rm eff}(|A_1|,|A_2|)} \times\epsilon^{-1/2} \right\rfloor,
    \end{align*}
    where $R_{\rm{eff}}(|A_1|,||A_2|)$ is the effective resistance of an electric circuit consisting of two resistors of resistance values $|A_1|$ and $|A_2|$ connected in parallel, that is,  $R_{\rm eff}^{-1}(|A_1|,|A_2|)=1/|A_1|+1/|A_2|$.
\end{theorem}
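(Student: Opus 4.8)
The plan is to read the location of the first maximum of $\mu_t(H_2)$ off Theorem \ref{thm: prob} and then convert the small-$\epsilon$ expansion of $\theta(\epsilon)$ into the claimed $\epsilon^{-1/2}$ scaling. Abbreviate $C=\sqrt{|A_1||A_2|}/(|A_1|+|A_2|)$, so that Theorem \ref{thm: prob} reads $\mu_t(H_2)=C^2\,(1-\cos(t\theta(\epsilon)))^2+O(\epsilon)$. The leading term $g(t):=C^2(1-\cos(t\theta(\epsilon)))^2$, viewed as a function of the real variable $t\ge 0$, vanishes at $t=0$, is strictly increasing on $[0,\pi/\theta(\epsilon)]$, and attains its global maximum $4C^2$ exactly when $t\theta(\epsilon)\in\pi+2\pi\mathbb{Z}$. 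Hence the first real maximizer of the main term is $t_\ast:=\pi/\theta(\epsilon)$, and — modulo the error analysis below — the first integer time $\tau(\epsilon)$ at which $\mu_t(H_2)$ is maximized lies in an $O(1)$-neighbourhood of $t_\ast$.

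Second, I would invert the cosine expansion given in Theorem \ref{thm: prob}. From $\cos\theta(\epsilon)=1-\epsilon/R_{\rm eff}+O(\epsilon^2)$ with $R_{\rm eff}^{-1}=1/|A_1|+1/|A_2|$, combined with $\cos\theta=1-\theta^2/2+O(\theta^4)$ and the a priori bound $\theta(\epsilon)=O(\sqrt\epsilon)$ forced by that line, one gets $\theta(\epsilon)^2=2\epsilon/R_{\rm eff}+O(\epsilon^2)$, hence $\theta(\epsilon)=\sqrt{2/R_{\rm eff}}\;\epsilon^{1/2}\,(1+O(\epsilon))$. Consequently $t_\ast=\pi/\theta(\epsilon)=\frac{\pi}{\sqrt2}\sqrt{R_{\rm eff}}\;\epsilon^{-1/2}\,(1+O(\epsilon))=\frac{\pi}{\sqrt2}\sqrt{R_{\rm eff}}\;\epsilon^{-1/2}+O(\epsilon^{1/2})$. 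Since this additive error tends to $0$, the integer part of $t_\ast$ coincides with $\lfloor\frac{\pi}{\sqrt2}\sqrt{R_{\rm eff}}\;\epsilon^{-1/2}\rfloor$ for all small $\epsilon$ (outside the exceptional values for which $\frac{\pi}{\sqrt2}\sqrt{R_{\rm eff}}\;\epsilon^{-1/2}$ lands within $O(\epsilon^{1/2})$ of an integer), which is the asserted formula.

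The remaining, and main, difficulty is to certify that the true integer first-maximizer is this floor rather than an adjacent integer. Near $t=t_\ast$, putting $s=t\theta(\epsilon)-\pi$ one has $g(t)=4C^2\cos^4(s/2)=4C^2-2C^2 s^2+O(s^4)$, while consecutive integers shift $s$ by $\theta(\epsilon)=O(\sqrt\epsilon)$; thus over the $O(1)$-neighbourhood of $t_\ast$ the main term equals $4C^2+O(\epsilon)$, i.e.\ the gaps between candidate integers are the same order as the remainder in Theorem \ref{thm: prob}. To handle this I would first invoke uniformity of the $O(\epsilon)$ remainder in Theorem \ref{thm: prob} for $t=O(\epsilon^{-1/2})$ — which is exactly what the spectral derivation of that theorem provides, the dynamics being dominated by the eigenvalue $e^{\pm\im\theta(\epsilon)}$ nearest to $1$ — and then compare $g$ at $\lfloor t_\ast\rfloor$ and $\lceil t_\ast\rceil$: the quadratic term $-2C^2 s^2$ separates the two by an amount of order $\epsilon$ with an explicit constant, so away from the exceptional $\epsilon$ for which the fractional part of $t_\ast$ is near $1/2$ the maximizer is the integer nearer to $t_\ast$, and a short monotonicity check of $g$ on $[0,t_\ast]$ together with $t_\ast=\frac{\pi}{\sqrt2}\sqrt{R_{\rm eff}}\,\epsilon^{-1/2}+o(1)$ identifies it with $\lfloor\frac{\pi}{\sqrt2}\sqrt{R_{\rm eff}}\,\epsilon^{-1/2}\rfloor$. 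Pinning the exact integer down in full rigour — in particular distinguishing floor from ceiling when the remainder is of the same size as the inter-step variation of the main term — is the hard part, and may require carrying the next-order term $\epsilon h(t)$ in the expansion of $\mu_t(H_2)$; everything else is routine Taylor bookkeeping.
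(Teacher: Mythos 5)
Your derivation --- locating the first maximum of the leading term of $\mu_t(H_2)$ at $t\theta(\epsilon)=\pi$ and inverting $\cos\theta(\epsilon)=1-\epsilon/R_{\rm eff}+O(\epsilon^2)$ to obtain $\theta(\epsilon)=\sqrt{2\epsilon/R_{\rm eff}}\,(1+O(\epsilon))$, hence $\tau(\epsilon)\approx\pi/\theta(\epsilon)=\frac{\pi}{\sqrt{2}}\sqrt{R_{\rm eff}}\,\epsilon^{-1/2}+O(\epsilon^{1/2})$ --- is exactly the route the paper intends; indeed the paper supplies no separate argument for this theorem beyond what is implicit in Theorem \ref{thm: prob}, so your write-up is at least as complete as the source. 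The rounding issue you flag honestly at the end is real: near $t_\ast$ the leading term varies by only $O(\epsilon)$ between consecutive integers, the same order as the remainder in Theorem \ref{thm: prob}, so identifying the exact integer maximizer with the stated floor (rather than an adjacent integer) is not justified by the available estimates --- but this is a gap in the paper's own treatment as much as in your proposal, and your diagnosis that resolving it would require uniformity of the remainder for $t=O(\epsilon^{-1/2})$ and control of the next-order term is correct.
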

Theorem \ref{thm: periodicity} implies that $\tau (\epsilon)$ is order $\epsilon^{-1/2}$. In addition, it is interesting that $R_{\rm{eff}}(|A_1|,||A_2|)$, used in electric circuits, has been found. However, the relationship is not clearly clarified.
Focusing on Theorem \ref{thm: prob}, we get the condition when $\mu_t(H_2)$ obtains the maximum value.
\begin{corollary}
    \label{cor: same}
    When $|A_1|=|A_2|$ and $t=\tau(\epsilon)$, $\mu_t(H_2)$ gets the maximum value $1+O(\epsilon)$. 
    Especially in the case $|A_1|=|A_2|$, it follows that
    \begin{align*}
        \mu_t(H_1)&=\cos^4 \left(\frac{t\theta(\epsilon)}{2}\right)+O(\epsilon), \\
        \mu_t(H_2)&=\sin^4 \left(\frac{t\theta(\epsilon)}{2}\right)+O(\epsilon).
    \end{align*}
\end{corollary}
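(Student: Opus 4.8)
\textbf{Proof proposal for Corollary \ref{cor: same}.}

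The plan is to specialize the asymptotic formulas of Theorem \ref{thm: prob} to the case $|A_1|=|A_2|=:m$. First I would substitute $|A_1|=|A_2|=m$ directly into the expressions for $\mu_t(H_1)$ and $\mu_t(H_2)$. For $\mu_t(H_1)$ the numerator becomes $m + m\cos(t\theta(\epsilon))$ and the denominator $2m$, so $\mu_t(H_1) = \left(\tfrac{1+\cos(t\theta(\epsilon))}{2}\right)^2 + O(\epsilon)$; applying the half-angle identity $\tfrac{1+\cos x}{2}=\cos^2(x/2)$ yields $\cos^4(t\theta(\epsilon)/2) + O(\epsilon)$. Similarly, $\sqrt{|A_1||A_2|}/(|A_1|+|A_2|) = m/(2m) = 1/2$, so $\mu_t(H_2) = \left(\tfrac{1-\cos(t\theta(\epsilon))}{2}\right)^2 + O(\epsilon)$, and the identity $\tfrac{1-\cos x}{2}=\sin^2(x/2)$ gives $\sin^4(t\theta(\epsilon)/2) + O(\epsilon)$. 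This establishes the two displayed formulas.

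Next I would identify the maximum. From the formula $\mu_t(H_2) = \sin^4(t\theta(\epsilon)/2) + O(\epsilon)$, the leading term is maximized (equal to $1$) exactly when $t\theta(\epsilon)/2$ is an odd multiple of $\pi/2$, i.e. $t\theta(\epsilon) \equiv \pi \pmod{2\pi}$; at such $t$ we obtain $\mu_t(H_2) = 1 + O(\epsilon)$. It remains to check that $t=\tau(\epsilon)$, the first maximizer identified in Theorem \ref{thm: periodicity}, is of this form, which follows because Theorem \ref{thm: periodicity} is proved precisely by locating the first time $t$ with $t\theta(\epsilon)\approx \pi$; using $\cos\theta(\epsilon) = 1 - (1/|A_1|+1/|A_2|)\epsilon + O(\epsilon^2) = 1 - 2\epsilon/m + O(\epsilon^2)$ we get $\theta(\epsilon) = 2\sqrt{\epsilon/m}\,(1+O(\epsilon))$, hence the first $t$ with $t\theta(\epsilon)\approx\pi$ is $\tau(\epsilon) = \lfloor (\pi/2)\sqrt{m/\epsilon}\rfloor = \lfloor (\pi/\sqrt2)\sqrt{R_{\rm eff}(m,m)}\,\epsilon^{-1/2}\rfloor$, consistent with Theorem \ref{thm: periodicity} since $R_{\rm eff}(m,m)=m/2$.

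The only genuine subtlety — and the step I would be most careful about — is the interaction between the floor function in $\tau(\epsilon)$ and the value $t\theta(\epsilon)$: since $\tau(\epsilon)$ is an integer while the exact maximizer of $\sin^4(t\theta(\epsilon)/2)$ need not be, one must confirm that rounding to the nearest integer changes $t\theta(\epsilon)$ by at most $O(\theta(\epsilon)) = O(\epsilon^{1/2})$, so that $\sin^4(\tau(\epsilon)\theta(\epsilon)/2) = 1 - O(\epsilon)$ and the discretization error is absorbed into the $O(\epsilon)$ term already present. Everything else is routine trigonometric simplification inheriting the error bounds from Theorem \ref{thm: prob} and Theorem \ref{thm: periodicity}. \qed
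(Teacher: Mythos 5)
Your proposal is correct and follows exactly the route the paper intends: the paper gives no separate proof of Corollary \ref{cor: same}, treating it as an immediate substitution of $|A_1|=|A_2|$ into Theorem \ref{thm: prob} followed by the half-angle identities, which is precisely what you do. Your additional check that the floor in $\tau(\epsilon)$ perturbs $t\theta(\epsilon)$ only by $O(\theta(\epsilon))=O(\epsilon^{1/2})$, so that $\sin^4(\tau(\epsilon)\theta(\epsilon)/2)=1-O(\epsilon)$, is a worthwhile refinement that the paper omits.
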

Corollary \ref{cor: same} shows that almost all quantum walker transfers from $H_1$ to $H_2$ when the number of arcs is the same on both graphs.
Figure \ref{fig: k6_c15} shows the case $K_6$ and $C_{15}$. Here, $C_n$ is the cycle graph with $n$ vertices. The structure of the graphs and the number of vertices are quite different; however, the number of arcs is equal. Therefore, almost all quantum walker come back and forth between the two graphs.
\begin{figure}[h]
    \centering
    \includegraphics[width=0.5\linewidth]{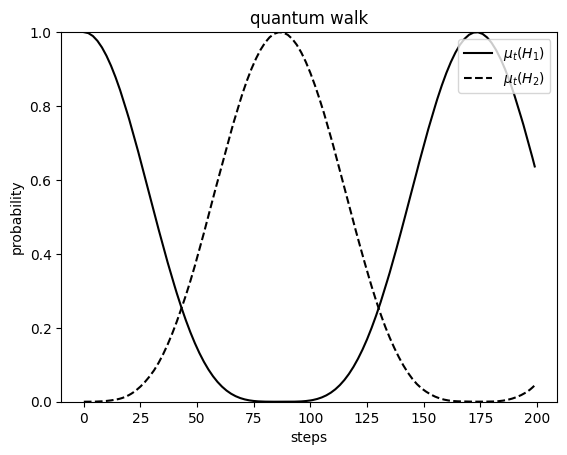}
    \caption{\red{The horizontal and vertical axes are the time steps $t$ and each finding probability in $H_j$ ($j=1,2$), respectively.} The solid and doted curves correspond to $\mu_t(K_6)$ and $\mu_t(C_{15})$ with $\epsilon=0.01$, respectively.}
    \label{fig: k6_c15}
\end{figure}

\section{Proof of Main Theorem}
In this section, we show a proof of Theorem \ref{thm: prob} and Theorem \ref{thm: periodicity}. First, we consider the spectral analysis of the random walk \red{on $G$ whose transition on each vertex to its neighbor at each time step is determined by $p_\epsilon$ in (\ref{eq:pe})}. Let $W(\epsilon)$ be this transition probability matrix of $G$ depending on the weight $w$, that is, 
\begin{align*}
    (W(\epsilon))_{x,y}=
    \begin{cases}
        1/\deg(x) &: x,y \notin \delta V, x\sim y, \\
        1/(\deg(x)+\epsilon) &: x \in \delta V, y \notin \delta V, x \sim y,\\
        \epsilon/(\deg(x)+\epsilon) &: x,y \in \delta V, \\
        0 &: otherwise,
    \end{cases}
\end{align*}
where $x \sim y$ means that the vertices $x$ and $y$ are \red{adjacent}.
Because of the property of the transition matrix, we immediately see
\begin{align}
    \label{eq: e_vec of epsilon}
    \ket{\boldsymbol{1}_V} \in \mathrm{Ker}(W(\epsilon)-I)
\end{align}
where $\ket{\boldsymbol{1}_{V}}$ is the all-one vector with respect to the set of \red{vertices} $V$. 
By an expansion with respect to $\epsilon$, $W(\epsilon)$ is denoted by
\begin{align*}
    W(\epsilon)=W^{(0)}+\epsilon W^{(1)}+O(\epsilon^{2}),
\end{align*}
where
\begin{align*}
    (W^{(0)})_{x,y}=
    \begin{cases}
        1/\deg(x) &: x\sim y, (x,y) \notin \bold{e}_{*}, \\
        0 &: otherwise,
    \end{cases}
\end{align*}
\begin{align*}
    (W^{(1)})_{x,y}=
    \begin{cases}
        -1/\deg(x)^{2} &: x \in \delta V, y \notin \delta V,   x\sim y,\\
        1/\deg(x) &: (x,y) \in \bold{e}_{*},\\
        0 &: otherwise.
    \end{cases}
\end{align*}
For simplicity, we deal with $W(\epsilon)$ expressed as a block matrix consisting of $H_1$ and $H_2$ components.
Then, $W^{(0)}$ is given by
\begin{align*}
    W^{(0)}=
    \begin{pmatrix}
        {\huge{W_1}} & {\huge{O}} \\
        {\huge{O}} & {\huge{W_2}}
    \end{pmatrix}.
\end{align*}
We should remark that $W_j \, \red{(j=1,2)}$ is the transition probability matrix of simple random walk on $H_j$. 
From the property of $W_j$, we immediately get
\begin{align*}
    \begin{bmatrix}
        \ket{\boldsymbol{1}_{V_1}} \\
        \boldsymbol{0}
    \end{bmatrix},
    \begin{bmatrix}
        \boldsymbol{0} \\
        \ket{\boldsymbol{1}_{V_2}}
    \end{bmatrix}
    \in \mathrm{Ker}(W^{(0)}-I).
\end{align*}
In other words, $W^{(0)}$ has eigenvalue 1 with multiplicity 2.
Then, the eigenvalue $1$ splits into two eigenvalues for small $\epsilon$.
Therefore, we should use a method called reduction process \cite{Kato} to get the perturbed eigenvalue of $W(\epsilon)$.
We should remark that one eigenvalue of $W(\epsilon)$ is $1$ from Eq. (\ref{eq: e_vec of epsilon}) and another one eigenvalue of $W(\epsilon)$ is expressed as 
\begin{align}\label{eq:expand}    \lambda(\epsilon)=1+\epsilon\lambda^{(1)}+O(\epsilon^2).
\end{align}
In this case, \red{according to \cite{Kato},} we should consider the following matrix
\begin{align*}
    \tilde{W}(\epsilon)=\frac{1}{\epsilon}(W(\epsilon)-I)\Pi,
\end{align*}
where $\Pi$ is the projection of $W^{(0)}$ corresponding to eigenvalue 1.
Since $1$ is semi-simple eigenvalue of $W^{(0)}$, the matrix $\tilde{W}(\epsilon)$ can be expressed as
\begin{align*}
    \tilde{W}(\epsilon)=\tilde{W}^{(1)}+O(\epsilon),
\end{align*}
where $\Tilde{W}^{(1)}=\Pi W^{(1)} \Pi$ [(2.20) in Ch. ii, Sect. 2.2 \cite{Kato}].
From a property of $W_j$, we have
\begin{align*}
    \Pi=
    \begin{bmatrix}
        \Pi_1 & {\huge O} \\
        {\huge O} & \Pi_2
    \end{bmatrix},
\end{align*}
where
\begin{align*}
    \Pi_j=\ket{\boldsymbol{1}_{V_j}}\bra{\pi_j},
\end{align*}
and $\ket{\pi_j}$ is the reversible measure of $W_{j}$ written as
\begin{align*}
    \pi_{j}(x)=\braket{\pi_j|x}=\frac{\deg(x)}{|A_j|}
\end{align*}
for $x \in V_j$.
By direct calculation, $\tilde{W}^{(1)}$ is denoted by
\begin{align*}
    \tilde{W}^{(1)}&=
    \begin{bmatrix}
        -\displaystyle \frac{1}{|A_1|}\Pi_1 & \displaystyle\frac{1}{|A_1|}\ket{\boldsymbol{1}_{V_1}}\bra{\pi_2} \\
        \displaystyle\frac{1}{|A_2|}\ket{\boldsymbol{1}_{V_2}}\bra{\pi_1} & -\displaystyle\frac{1}{|A_2|}\Pi_2
    \end{bmatrix} 
    =
    \begin{bmatrix}
        &\displaystyle\frac{1}{|A_1|}\ket{\boldsymbol{1}_{V_1}} \\
        &-\displaystyle\frac{1}{|A_2|}\ket{\boldsymbol{1}_{V_2}}
    \end{bmatrix}
    \begin{bmatrix}
        -\bra{\pi_1} & \bra{\pi_2}
    \end{bmatrix}.
\end{align*}
From \cite{Kato} and the above formula, the coefficient $\lambda^{(1)}$ can be obtained by
\begin{align*}
    \lambda^{(1)} \in \mathrm{Spec}\left(\tilde{W}^{(1)}|_{\mathrm{Ran(\Pi)}}\right)=\left\{-\frac{1}{|A_1|}-\frac{1}{|A_2|}\right\}.
\end{align*}
The eigenvector of $\lambda^{(1)}$ is calculated by
\begin{align*}
    \begin{bmatrix}
            &-|A_2|\,\ket{\boldsymbol{1}_{V_1}} \\
            &|A_1|\, \ket{\boldsymbol{1}_{V_2}}
        \end{bmatrix}
        \in \mathrm{Ker}(\tilde
        W^{(1)}-\lambda^{(1)} I).
\end{align*}
Here, we should remark that if the eigenprojections of $W(\epsilon)$ and $\tilde{W}^{(1)}$ is defined $P(\epsilon)$ and $P$, then it follows that
\begin{align*}
    P(\epsilon)=P+O(\epsilon)
\end{align*}
\red{because perturbed operator $U(\epsilon)$ is also unitary, that is, semi-simple.}
Therefore, we get the following lemma.
\begin{lemma}
    \label{lem: eigenpair of W}
    Let $W(\epsilon)$ be \red{the} transition matrix depending on the weight $w$. Then, it follows that
    \begin{align*}
        \begin{bmatrix}
            \ket{\boldsymbol{1}_{V_1}} \\
            \ket{\boldsymbol{1}_{V_2}}
        \end{bmatrix}
        \in \mathrm{Ker}(W(\epsilon)-I), 
        \quad
        \begin{bmatrix}
            &-|A_2|\,\ket{\boldsymbol{1}_{V_1}} \\
            &|A_1|\, \ket{\boldsymbol{1}_{V_2}}
        \end{bmatrix}+O(\epsilon)
        \in \mathrm{Ker}(W(\epsilon)-\cos (\theta(\epsilon)) I),
    \end{align*}
    where
    \begin{align*}
        \cos (\theta(\epsilon))=1-\left(\frac{1}{|A_1|}+\frac{1}{|A_2|}\right)\epsilon+O(\epsilon^2).
    \end{align*}
\end{lemma}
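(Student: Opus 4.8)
The first claim is immediate and is nothing but Eq.~(\ref{eq: e_vec of epsilon}): since $W(\epsilon)$ is row-stochastic, the all-one vector $\ket{\bs{1}_V}=[\ket{\bs{1}_{V_1}};\ket{\bs{1}_{V_2}}]$ satisfies $W(\epsilon)\ket{\bs{1}_V}=\ket{\bs{1}_V}$, hence lies in $\Ker(W(\epsilon)-I)$. The substance of the lemma is therefore the second eigenpair, and the plan is simply to package the reduction-process computation carried out above into one statement and to legitimize the perturbation expansion.

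The key steps, in order, are as follows. First I would record that $W^{(0)}$ has $1$ as a semisimple eigenvalue of multiplicity $2$ with eigenprojection $\Pi=\mathrm{diag}(\Pi_1,\Pi_2)$, $\Pi_j=\ket{\bs{1}_{V_j}}\bra{\pi_j}$; connectedness of each $H_j$ enters here, since it makes $1$ a simple eigenvalue of the reversible matrix $W_j$. Second, I would use the already-computed form $\tilde W^{(1)}=\Pi W^{(1)}\Pi=\ket{u}\bra{f}$ with $\ket{u}=[\,\tfrac{1}{|A_1|}\ket{\bs{1}_{V_1}};-\tfrac{1}{|A_2|}\ket{\bs{1}_{V_2}}\,]$ and $\bra{f}=[\,-\bra{\pi_1},\bra{\pi_2}\,]$, observe that it has rank one, and read off its unique nonzero eigenvalue $\lambda^{(1)}=\braket{f|u}=-\tfrac{1}{|A_1|}-\tfrac{1}{|A_2|}$ together with the corresponding eigenvector $\ket{u}$, equivalently the proportional vector $\ket{v}=[\,-|A_2|\ket{\bs{1}_{V_1}};|A_1|\ket{\bs{1}_{V_2}}\,]=-|A_1||A_2|\ket{u}$ used in the statement; the only arithmetic input is the degree identity $\sum_{x\in V_j}\deg(x)=|A_j|$, which gives $\braket{\pi_j|\bs{1}_{V_j}}=1$. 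Third, invoking Kato's reduction process \cite{Kato} I would conclude that, for small $\epsilon>0$, the eigenvalue $1$ of $W^{(0)}$ splits into the identically-$1$ eigenvalue and a second eigenvalue $\lambda(\epsilon)=1+\epsilon\lambda^{(1)}+O(\epsilon^2)$, and that the corresponding total eigenprojection satisfies $P(\epsilon)=P+O(\epsilon)$, where $P$ is the eigenprojection of $\tilde W^{(1)}$ onto $\mathrm{span}\{\ket{v}\}$ within $\Ran(\Pi)$. Hence the eigenvector of $W(\epsilon)$ for $\lambda(\epsilon)$ is $P(\epsilon)\ket{v}=\ket{v}+O(\epsilon)$, and setting $\cos\theta(\epsilon):=\lambda(\epsilon)$ produces exactly the stated expansion $\cos\theta(\epsilon)=1-(1/|A_1|+1/|A_2|)\epsilon+O(\epsilon^2)$.

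The \emph{only real obstacle} is the legitimacy of this expansion, i.e.\ that $\lambda(\epsilon)$ and $P(\epsilon)$ are genuinely analytic (hence $C^1$) at $\epsilon=0$. For this I would note that $W(\epsilon)$ is a matrix-valued rational function of $\epsilon$ whose only denominators are $\deg(\xi_j)+\epsilon$, nonzero for $\epsilon$ near $0$, so $W(\epsilon)$ is analytic there; that $W(\epsilon)$ is reversible with respect to the weighted-degree measure $m$, hence diagonalizable with real spectrum, which validates the semisimplicity hypothesis underlying the reduction process; and that $\lambda^{(1)}=-(1/|A_1|+1/|A_2|)\neq 0$, which guarantees the split is genuine so that the two eigenvalues near $1$ stay distinct for all small $\epsilon>0$ and the $O(\epsilon)$ error in the eigenvector is meaningful. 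Everything else is the bookkeeping already displayed before the statement; the choice of the letter $\theta(\epsilon)$ merely anticipates the spectral-mapping correspondence between $W(\epsilon)$ and $U(\epsilon)$ exploited in the subsequent proof.
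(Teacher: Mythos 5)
Your proposal is correct and follows essentially the same route as the paper: both treat the first eigenpair as immediate from stochasticity and obtain the second via Kato's reduction process applied to the doubly degenerate eigenvalue $1$ of $W^{(0)}$, reading off $\lambda^{(1)}$ and the eigenvector from the rank-one matrix $\tilde W^{(1)}=\Pi W^{(1)}\Pi$. Your additional remarks on analyticity of $W(\epsilon)$, reversibility (hence semisimplicity), and $\lambda^{(1)}\neq 0$ only make explicit hypotheses the paper leaves implicit.
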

\red{
\begin{remark}\label{rem:rwconverge}
From Lemma~\ref{lem: eigenpair of W}, the mixing time of the random walk $W(\epsilon)$ can be roughly estimated by considering $\cos^t(\theta(\epsilon))<\delta$ for a small value $\delta$ because $\cos\theta(\epsilon)$ is the second largest eigenvalue of $W(\epsilon)$. 
This implies that a convergence time for the random walk is estimated by 
$O(\epsilon^{-1})$ because 
\begin{align*} 
\cos^t\theta(\epsilon) &\sim (1-(1/|A_1|+1/|A_2|)\epsilon)^t \\ 
&\sim e^{-(1/|A_1|+1/|A_2|)\epsilon t} \\
& <e^{-|\log \delta|}. 
\end{align*}
\end{remark}
}
Next, we show the proposition connecting the eigenvalue and the eigenvector between $W(\epsilon)$ and $U(\epsilon)$.
Let $\pi_{G} \in \mathbb{C}^{V}$ be the reversible measure of $W^{(0)}$, that is,
\begin{align*}
    \pi_{G}(x)=\frac{\deg(x)}{|A_{1}|+|A_{2}|}.
\end{align*}
We define the diagonal matrix $D^{1/2}: \mathbb{C}^{V}\rightarrow \mathbb{C}^{V}$ by
\begin{align*}
    (D^{1/2})_{x,y}=
    \begin{cases}
        \sqrt{\pi_{G}(x)} &:x=y, \\
        0 &:otherwise.
    \end{cases}
\end{align*}
Then, we give the following proposition.
\begin{proposition}[\cite{HKSS}]
    \label{prop: SMT}
    Let us set $f \in \mathrm{Ker}(\cos \theta (\epsilon)-W(\epsilon))$. Then, the corresponding eigenvector $\psi_{\pm \theta} \in \mathrm{Ker}(e^{\pm i \theta (\epsilon)}-U(\epsilon))$ is given by
    \begin{align*}
        \psi_{\pm \theta}(a)=
        \begin{cases}
            \displaystyle\frac{1}{\sqrt{2(|A_{1}|+|A_{2}|)}|\sin \theta|}(f(t(a))-e^{\pm i \theta} f(o(a)))+O(\sqrt{\epsilon}) &: \cos \theta \neq \pm 1, \\
            f(t(a)) &: \cos \theta=\pm 1,
        \end{cases}
    \end{align*}
    with $||D^{1/2}f||=1$.
\end{proposition}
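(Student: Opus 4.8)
The plan is to recover Proposition~\ref{prop: SMT} from the general correspondence between a Grover-type walk $U(\epsilon)=S(2d_\epsilon^*d_\epsilon-I)$ and its discriminant (the ``quantum spectral mapping theorem''), and then to pass to the stated asymptotics by a first-order expansion in $\epsilon$. First I would record that $d_\epsilon$ is an isometry on the vertex side: $(d_\epsilon d_\epsilon^*)_{x,x'}=\delta_{x,x'}\sum_{a:\,t(a)=x}p_\epsilon(\bar{a})=\delta_{x,x'}$ since $p_\epsilon$ is a transition probability, so $d_\epsilon d_\epsilon^*=I$ and $U(\epsilon)$ is a product of the two involutions $S$ and $2d_\epsilon^*d_\epsilon-I$. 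The discriminant $T(\epsilon):=d_\epsilon Sd_\epsilon^*$ has $(x,y)$-entry $\sqrt{p_\epsilon(a)p_\epsilon(\bar{a})}$ for the arc $a$ from $y$ to $x$; by detailed balance for the reversible walk $W(\epsilon)$ with stationary vector $\pi_\epsilon(x)=m(x)/\sum_z m(z)$, this equals $D_\epsilon^{1/2}W(\epsilon)D_\epsilon^{-1/2}$ with $D_\epsilon=\mathrm{diag}(\pi_\epsilon)$. Hence $W(\epsilon)f=\cos\theta(\epsilon)f$ gives $T(\epsilon)g=\cos\theta(\epsilon)g$ for $g:=D_\epsilon^{1/2}f$, and since $\deg(x)\ge1$ we have $\pi_\epsilon=\pi_G+O(\epsilon)$ uniformly, so $g=D^{1/2}f+O(\epsilon)$ and $\|g\|=1+O(\epsilon)$ under the hypothesis $\|D^{1/2}f\|=1$.

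The algebraic core is that, for $\cos\theta\ne\pm1$, the vector $\Psi_{\pm\theta}:=(I-e^{\pm i\theta}S)d_\epsilon^*g$ is a nonzero eigenvector of $U(\epsilon)$ with eigenvalue $e^{\pm i\theta}$. This is a one-line computation using $d_\epsilon d_\epsilon^*=I$, $d_\epsilon Sd_\epsilon^*=T(\epsilon)$, $S^2=I$, $T(\epsilon)g=\cos\theta\,g$, and $2e^{\pm i\theta}\cos\theta-1=e^{\pm2i\theta}$, which gives $U(\epsilon)\Psi_{\pm\theta}=e^{\pm i\theta}\big(d_\epsilon^*g-e^{\pm i\theta}Sd_\epsilon^*g\big)=e^{\pm i\theta}\Psi_{\pm\theta}$. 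Reading this in the arc basis via $(d_\epsilon^*g)(a)=\sqrt{p_\epsilon(\bar{a})}\,g(t(a))$ and $(Sd_\epsilon^*g)(a)=\sqrt{p_\epsilon(a)}\,g(o(a))$,
\[
\Psi_{\pm\theta}(a)=\sqrt{p_\epsilon(\bar{a})}\,g(t(a))-e^{\pm i\theta}\sqrt{p_\epsilon(a)}\,g(o(a)).
\]

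Now I would substitute $g=D_\epsilon^{1/2}f$ and expand. For a non-bridge arc, $\sqrt{p_\epsilon(\bar{a})\,\pi_\epsilon(t(a))}=1/\sqrt{|A_1|+|A_2|}+O(\epsilon)$ (and similarly for the $o(a)$ factor), while on the two bridge arcs $p_\epsilon=O(\epsilon)$ forces those entries to be $O(\sqrt\epsilon)$; hence on $A_1\cup A_2$, $\Psi_{\pm\theta}(a)=\tfrac{1}{\sqrt{|A_1|+|A_2|}}\big(f(t(a))-e^{\pm i\theta}f(o(a))\big)+O(\epsilon)$. To normalize I would compute $\|\Psi_{\pm\theta}\|^2$, taking $f$ real (possible as $W(\epsilon)$ is conjugate to the real symmetric $T(\epsilon)$): each of $\sum_{a\in A_1\cup A_2}|f(t(a))|^2$ and $\sum_{a\in A_1\cup A_2}|f(o(a))|^2$ equals $\sum_x\deg(x)|f(x)|^2=(|A_1|+|A_2|)\|D^{1/2}f\|^2=|A_1|+|A_2|$, and the cross term $\sum_{a\in A_1\cup A_2}f(o(a))f(t(a))=\sum_x f(x)\sum_{y\sim x}f(y)=\cos\theta\,(|A_1|+|A_2|)+O(\epsilon)$ using $W^{(0)}f=\cos\theta\,f+O(\epsilon)$; combining, $\|\Psi_{\pm\theta}\|^2=2(1-\cos^2\theta)+O(\epsilon)=2\sin^2\theta+O(\epsilon)$, so $\psi_{\pm\theta}=\Psi_{\pm\theta}/\|\Psi_{\pm\theta}\|$ is the stated vector. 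In the degenerate case $\cos\theta=\pm1$ one has $e^{\pm i\theta}=\pm1$ and the $W(\epsilon)$-eigenvector lies in the $\pm1$-eigenspace of $S$ (it is constant on $G$ when $\cos\theta=1$), so $Sd_\epsilon^*g=\pm d_\epsilon^*g$ and $\psi_{\pm\theta}=d_\epsilon^*g$, which in components is proportional to $f(t(a))$.

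The main obstacle is the error bookkeeping rather than the algebra: the eigenvalue used later has $\cos\theta(\epsilon)\to1$, so $|\sin\theta(\epsilon)|=O(\sqrt\epsilon)$ is itself small, and dividing $\Psi_{\pm\theta}$ by $\|\Psi_{\pm\theta}\|=\sqrt2\,|\sin\theta|+O(\epsilon)$ must be arranged so as not to spoil the claimed $O(\epsilon)$. The clean way is to carry the exact eigenvector $\Psi_{\pm\theta}(a)=\sqrt{p_\epsilon(\bar{a})}\,g(t(a))-e^{\pm i\theta}\sqrt{p_\epsilon(a)}\,g(o(a))$ with $g=D_\epsilon^{1/2}f$ and only invoke the displayed asymptotic where it is actually used, namely inside sums over $a\in A_j$ (which never involve the bridge arcs, whose entries are only $O(\sqrt\epsilon)$); it remains to check that the $O(\epsilon)$ in $D_\epsilon^{1/2}=D^{1/2}+O(\epsilon)$ is uniform and that $f$ may indeed be chosen real.
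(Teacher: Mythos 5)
Your proposal is correct, and it does more work than the paper does: where the paper simply invokes the spectral mapping theorem of \cite{HKSS} as a black box to write down $\psi_{\pm\theta}(a)$ in terms of $g=D^{1/2}f$ and then substitutes $g(x)=\sqrt{\pi_G(x)}f(x)$ to cancel the $\deg$ factors, you rederive the correspondence from scratch via the identity $U(\epsilon)\,(I-e^{\pm i\theta}S)d_\epsilon^*g=e^{\pm i\theta}(I-e^{\pm i\theta}S)d_\epsilon^*g$ and the normalization $\|\Psi_{\pm\theta}\|^2=2\sin^2\theta+O(\epsilon)$. The algebra checks out ($1-2e^{\pm i\theta}\cos\theta=-e^{\pm 2i\theta}$ is exactly the step that makes the eigenvector ansatz work), and your component formula $\Psi_{\pm\theta}(a)=\sqrt{p_\epsilon(\bar a)}g(t(a))-e^{\pm i\theta}\sqrt{p_\epsilon(a)}g(o(a))$ together with $\sqrt{p_\epsilon(\bar a)\pi_\epsilon(t(a))}=1/\sqrt{|A_1|+|A_2|}+O(\epsilon)$ reproduces the stated expression. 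Two points where your treatment is genuinely sharper than the paper's: (i) you distinguish the exact similarity $T(\epsilon)=D_\epsilon^{1/2}W(\epsilon)D_\epsilon^{-1/2}$ (with the $\epsilon$-dependent stationary measure) from the paper's $W_{sym}=D^{1/2}WD^{-1/2}$, which holds only up to $O(\epsilon)$; and (ii) you flag that $|\sin\theta(\epsilon)|=O(\sqrt{\epsilon})$ for the eigenvalue actually used downstream, so an $O(\epsilon)$ error committed before dividing by $\|\Psi_{\pm\theta}\|=\sqrt2|\sin\theta|+O(\epsilon)$ degrades to $O(\sqrt\epsilon)$ --- a bookkeeping issue the paper's proof silently absorbs into its $O(\epsilon)$ symbols, and which your suggestion of carrying the exact eigenvector until it is paired with the initial state (where the ratio $(1-e^{\pm i\theta})/\sin\theta$ is $O(1)$) resolves cleanly. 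The cost of your route is length; the benefit is that the proposition becomes self-contained and the error terms are auditable.
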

\begin{proof}
    Let us set $f \in \mathrm{Ker}(\cos\theta (\epsilon)-W(\epsilon)),\; g \in \mathrm{Ker}(\cos \theta (\epsilon)-W_{sym})$ with $||g||=1$. Here, $W_{sym}\red{(\epsilon)}=d_{\epsilon}Sd_{\epsilon}^{*}$.
    We should remark that $W_{sym}\red{(\epsilon)}=D^{1/2}(W\red{(\epsilon)})D^{-1/2}$. Then, we immediately see $g=D^{1/2}f$, that is, 
    \begin{align}
        \label{eq: gf}
        g(x)= \sqrt{\pi_{G}(x)}f(x)
    \end{align}
    for $x \in V$.
    By spectral mapping theorem of quantum walk \cite{HKSS}, we see
    \begin{align*}
        \psi_{\pm \theta}(a)=
        \begin{cases}
            \displaystyle\frac{1}{\sqrt{2}|\sin \theta|}\left(\frac{1}{\sqrt{\deg (t(a))}}g(t(a))- \frac{e^{\pm i \theta}}{\sqrt{\deg (o(a))}}g(o(a))\right)+O(\sqrt{\epsilon}) &: \cos \theta \neq \pm 1, \\
            \displaystyle\frac{1}{\sqrt{\deg (t(a))}}f(t(a)) &: \cos \theta=\pm 1.
        \end{cases}
    \end{align*}
    Combining Eq.\,(\ref{eq: gf}) and above equation, for $\cos \theta \neq 1$ case, it is obtained by
    \begin{align*}
        \psi_{\pm \theta}(a)&=\displaystyle\frac{1}{\sqrt{2}|\sin \theta|}\left(\frac{1}{\sqrt{\deg (t(a))}}\sqrt{\pi_{G}(t(a))}f(t(a))- \frac{e^{\pm i \theta}}{\sqrt{\deg (o(a))}}\sqrt{\pi_{G}(o(a))}f(o(a))\right)+O(\sqrt{\epsilon}) \\
        &=\displaystyle\frac{1}{\sqrt{2(|A_{1}|+|A_{2}|)}|\sin \theta|}\left(f(t(a))-e^{\pm i \theta}f(o(a))\right)+O(\sqrt{\epsilon})
    \end{align*}
    with $||g||=||D^{1/2}f||=1$. Similarly, we see the $\cos \theta= 1$ case.
    Hence, we get the desired conclusion.
\end{proof}
Combining with Lemma \ref{lem: eigenpair of W} and Proposition \ref{prop: SMT}, we immediately get the following lemma.
\begin{lemma}
    \label{lem: eigenpair of U}
    Let $U(\epsilon)$ be the time evolution matrix. Then it follows that
    \begin{align*}
        \{1, e^{\pm i \theta(\epsilon)}\} \subset \mathrm{Spec}(U(\epsilon)).
    \end{align*}
    Corresponding eigenvectors $\ket{\psi_1} \in \mathrm{Ker}(I-U(\epsilon))$ and $\ket{\psi_{\pm\theta}} \in \mathrm{Ker}(e^{\pm i\theta(\epsilon)}-U(\epsilon))$ are given by
    \begin{align*}
        \psi_1(a)=\frac{1}{\sqrt{|A_1|+|A_2|}},
    \end{align*} 
    \begin{align*}
        \psi_{\pm\theta}(a)=\frac{1-e^{\pm i\theta(\epsilon)}}{\sqrt{2|A_1||A_2|(|A_1|+|A_2|)}\,|\sin (\theta(\epsilon))|}\times
        \begin{cases}
            -|A_2| &: t(a) \in V_1, \\
            |A_1| &: t(a) \in V_2,
        \end{cases}
    \end{align*}
    respectively.
\end{lemma}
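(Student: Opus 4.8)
The strategy is to combine the two results just proved: Lemma~\ref{lem: eigenpair of W} supplies the relevant eigenpairs of the classical transition matrix $W(\epsilon)$, and Proposition~\ref{prop: SMT} transports each of them to an eigenpair of the quantum operator $U(\epsilon)$. All that then remains is to normalize correctly and to simplify the resulting arc functions, so the argument is essentially a bookkeeping exercise.

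First I would note that, since $\cos\theta(\epsilon)=1-(1/|A_1|+1/|A_2|)\epsilon+O(\epsilon^{2})$, we have $\theta(\epsilon)\to 0^{+}$, so for all sufficiently small $\epsilon>0$ the number $\cos\theta(\epsilon)$ lies strictly inside $(-1,1)$ and $1,e^{i\theta(\epsilon)},e^{-i\theta(\epsilon)}$ are three distinct unimodular numbers. By Lemma~\ref{lem: eigenpair of W}, $\ket{\boldsymbol{1}_V}\in\mathrm{Ker}(W(\epsilon)-I)$ and $f:=[-|A_2|\ket{\boldsymbol{1}_{V_1}}\,;\,|A_1|\ket{\boldsymbol{1}_{V_2}}]+O(\epsilon)\in\mathrm{Ker}(W(\epsilon)-\cos\theta(\epsilon))$. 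Applying Proposition~\ref{prop: SMT} to $\ket{\boldsymbol{1}_V}$ through the branch $\cos\theta=1$ and to $f$ through the branch $\cos\theta\neq\pm1$ produces nonzero vectors in $\mathrm{Ker}(I-U(\epsilon))$ and in $\mathrm{Ker}(e^{\pm i\theta(\epsilon)}-U(\epsilon))$ respectively, which already yields $\{1,e^{\pm i\theta(\epsilon)}\}\subset\mathrm{Spec}(U(\epsilon))$.

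Next I would carry out the normalization $\|D^{1/2}f\|=1$ required by Proposition~\ref{prop: SMT}. Using $\pi_G(x)=\deg(x)/(|A_1|+|A_2|)$ and $\sum_{x\in V_j}\deg(x)=|A_j|$ one gets $\sum_{x\in V}\pi_G(x)=1$ and $\sum_{x\in V_j}\pi_G(x)=|A_j|/(|A_1|+|A_2|)$, hence $\|D^{1/2}\ket{\boldsymbol{1}_V}\|^{2}=1$ while $\|D^{1/2}f\|^{2}=|A_2|^{2}\tfrac{|A_1|}{|A_1|+|A_2|}+|A_1|^{2}\tfrac{|A_2|}{|A_1|+|A_2|}+O(\epsilon)=|A_1||A_2|+O(\epsilon)$. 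So the correctly normalized inputs are $\ket{\boldsymbol{1}_V}$ itself and $\tfrac{1}{\sqrt{|A_1||A_2|}}f$. Feeding $\ket{\boldsymbol{1}_V}$ into the $\cos\theta=1$ branch gives the constant arc function, which after rescaling to unit $\ell^{2}$-norm on $\mathbb{C}^{A}$ is $\psi_1(a)=1/\sqrt{|A_1|+|A_2|}$. Feeding $\tfrac{1}{\sqrt{|A_1||A_2|}}f$ into the other branch, and using that $f$ is constant equal to $-|A_2|$ on $V_1$ and to $|A_1|$ on $V_2$ up to $O(\epsilon)$ — so that $f(t(a))-e^{\pm i\theta(\epsilon)}f(o(a))=(1-e^{\pm i\theta(\epsilon)})f(t(a))+O(\epsilon)$ whenever the two endpoints of $a$ lie in the same $V_j$ — the formula of Proposition~\ref{prop: SMT} collects to exactly the claimed expression for $\psi_{\pm\theta}$, with the two cases distinguished by whether $t(a)\in V_1$ or $t(a)\in V_2$.

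The one place requiring care, and the main obstacle, is the pair of bridge arcs: there $p_\epsilon(\cdot)=O(\epsilon)$, so the clean form of Proposition~\ref{prop: SMT} (derived under $p_\epsilon(a)\approx 1/\deg(o(a))$) does not apply verbatim, and correspondingly $\ket{\boldsymbol{1}_V}$ is not an \emph{exact} eigenvector of the symmetrized operator. I would settle this by a direct check from $U(\epsilon)=S(2d_\epsilon^{*}d_\epsilon-I)$, using $d_\epsilon d_\epsilon^{*}=I_{V}$: the genuine $(+1)$-eigenvector of $U(\epsilon)$ is $d_\epsilon^{*}g$ with $g$ the $(+1)$-eigenvector of $D^{1/2}W(\epsilon)D^{-1/2}$, and it equals $1/\sqrt{|A_1|+|A_2|}$ on every non-bridge arc up to $O(\epsilon)$ while being only $O(\sqrt\epsilon)$ on the bridge; an identical argument pins down $\psi_{\pm\theta}$ on the bridge. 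Since the two bridge arcs belong to neither $A_1$ nor $A_2$, these exceptional values never enter $\mu_t(H_1)$ or $\mu_t(H_2)$ and can be absorbed into the stated $O(\epsilon)$ error together with the boundary perturbation tails of $f$ at $\xi_1,\xi_2$. Thus the substance of the proof is purely verifying that neither the $O(\epsilon)$ correction to $f$ near $\delta V$ nor the failure of the clean spectral-mapping formula on the bridge disturbs the leading coefficients asserted in the lemma.
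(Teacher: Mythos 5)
Your proposal follows exactly the paper's route: the paper derives this lemma by "immediately" combining Lemma~\ref{lem: eigenpair of W} with Proposition~\ref{prop: SMT}, which is precisely what you do, and your normalization computation $\|D^{1/2}f\|^{2}=|A_1||A_2|+O(\epsilon)$ and the resulting coefficients are correct. Your extra care about the two bridge arcs (where the clean spectral-mapping formula and the constant-eigenvector claim only hold up to $O(\sqrt{\epsilon})$, and which lie outside $A_1\cup A_2$ anyway) is a legitimate refinement that the paper silently glosses over, but it does not change the argument's structure.
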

Since $U(\epsilon)$ is the unitary, we have
\begin{align*}
    \mu_{t}(H_1)&=\sum_{a \in A_1}|\bra{a}U(\epsilon)^t\ket{\psi_0}|^2 \\
    &=\sum_{a \in A_1}\left|\sum_{\red{\nu}\in \mathrm{Spec}(U(\epsilon))}\nu^t\braket{a|\psi_{\red{\nu}}}\braket{\psi_{\red{\nu}}|\psi_0}\right|^2.
\end{align*}
\red{
Here, $\psi_{\nu}$ is the eigenvector corresponding to the eigenvalue $\nu$.
}
We focus on the overlap between the eigenvectors and the initial state. Then we get the following lemma.
\begin{lemma}
\label{lem: overlap}
Let $\ket{\psi_1}$ and $\ket{\psi_{\pm \theta}}$ be eigenvectors of $U(\epsilon)$ corresponding to eigenvalues 1 and $e^{\pm i\theta(\epsilon)}$, respectively.
Let $\ket{\psi_0}$ be an initial state given by Eq. (\ref{eq: initial state}). Then we have
    \begin{align*}
        |\braket{\psi_1|\psi_0}|&=\sqrt{\frac{|A_1|}{|A_1|+|A_2|}}, \\
        |\braket{\psi_{\pm\theta}|\psi_0}|&=\frac{1}{\sqrt{2}}\sqrt{\frac{|A_2|}{|A_1|+|A_2|}}+O(\epsilon).
    \end{align*}
\end{lemma}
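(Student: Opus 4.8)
The plan is to compute the two overlaps directly, plugging the closed-form eigenvectors of Lemma~\ref{lem: eigenpair of U} and the definition (\ref{eq: initial state}) of $\ket{\psi_0}$ into $\braket{\psi_1|\psi_0}=\sum_{a\in A}\overline{\psi_1(a)}\,\psi_0(a)$ and likewise for $\psi_{\pm\theta}$, and then cleaning up the $\epsilon$-dependence. The two structural facts that make this work are that $\ket{\psi_0}$ is supported only on $A_1$, so only the values of the eigenvectors on $A_1$ matter, and that every arc $a\in A_1$ has both $o(a)$ and $t(a)$ in $V_1$, so on that support the eigenvectors of Lemma~\ref{lem: eigenpair of U} are constant. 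Beyond that, one must keep track of two $O(\epsilon)$ effects: the eigenvector formulas of Lemma~\ref{lem: eigenpair of U} are only leading-order (they inherit an $O(\epsilon)$ remainder from Lemma~\ref{lem: eigenpair of W} and Proposition~\ref{prop: SMT}), and the prefactor of $\psi_{\pm\theta}$ contains $\theta(\epsilon)$, so the quotient $|1-e^{\pm i\theta(\epsilon)}|/|\sin\theta(\epsilon)|$ has to be expanded near $\epsilon=0$.

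The first overlap is immediate: with $\psi_1(a)=1/\sqrt{|A_1|+|A_2|}$ and $\psi_0(a)=1/\sqrt{|A_1|}$ on $A_1$, the sum over the $|A_1|$ arcs of $A_1$ gives $\braket{\psi_1|\psi_0}=|A_1|\cdot\frac{1}{\sqrt{|A_1|+|A_2|}}\cdot\frac{1}{\sqrt{|A_1|}}=\sqrt{|A_1|/(|A_1|+|A_2|)}$, with no remainder, which is the first claimed identity. For the second, since $t(a)\in V_1$ for every $a\in A_1$, the leading part of $\psi_{\pm\theta}$ equals the single constant $\frac{(1-e^{\pm i\theta(\epsilon)})(-|A_2|)}{\sqrt{2|A_1||A_2|(|A_1|+|A_2|)}\,|\sin\theta(\epsilon)|}$ on the support of $\ket{\psi_0}$; summing its conjugate times $1/\sqrt{|A_1|}$ over those $|A_1|$ arcs, taking moduli, and using the half-angle identities $|1-e^{\pm i\theta}|=2|\sin(\theta/2)|$ and $|\sin\theta|=2|\sin(\theta/2)\cos(\theta/2)|$ yields
\begin{align*}
  |\braket{\psi_{\pm\theta}|\psi_0}| &= \frac{\sqrt{|A_1|}\,|A_2|}{\sqrt{2|A_1||A_2|(|A_1|+|A_2|)}}\cdot\frac{|1-e^{\pm i\theta(\epsilon)}|}{|\sin\theta(\epsilon)|}+O(\epsilon) \\
  &= \sqrt{\frac{|A_2|}{2(|A_1|+|A_2|)}}\cdot\frac{1}{|\cos(\theta(\epsilon)/2)|}+O(\epsilon).
\end{align*}
Finally, Lemma~\ref{lem: eigenpair of W} gives $\cos\theta(\epsilon)=1-(1/|A_1|+1/|A_2|)\epsilon+O(\epsilon^2)$, so $\theta(\epsilon)=O(\sqrt\epsilon)$ and hence $1/|\cos(\theta(\epsilon)/2)|=1+O(\epsilon)$; absorbing this into the remainder gives $|\braket{\psi_{\pm\theta}|\psi_0}|=\frac{1}{\sqrt2}\sqrt{|A_2|/(|A_1|+|A_2|)}+O(\epsilon)$, as claimed.

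The argument is elementary once the eigenvectors of Lemma~\ref{lem: eigenpair of U} are granted, so I do not anticipate a serious obstacle; the only points requiring care are bookkeeping ones. First, replacing the true perturbed $U(\epsilon)$-eigenvectors by their leading terms perturbs each overlap by at most $||\psi_0||$ times an $O(\epsilon)$ quantity, which is legitimate because we are in a fixed finite-dimensional space with an analytic perturbation. Second, one should check that the vanishing of $|\sin\theta(\epsilon)|$ in the denominator does not produce a blow-up; this is exactly what the half-angle identity prevents, since $|1-e^{\pm i\theta(\epsilon)}|$ vanishes at the same rate and the ratio tends to $1$.
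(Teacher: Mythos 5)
Your proposal is correct and follows essentially the same route as the paper: compute both overlaps directly from the constant eigenvector values of Lemma~\ref{lem: eigenpair of U} on the support $A_1$ of $\ket{\psi_0}$, then reduce $|1-e^{\pm i\theta(\epsilon)}|/|\sin\theta(\epsilon)|$ to $1/|\cos(\theta(\epsilon)/2)|=1+O(\epsilon)$ via the half-angle identity. Your additional remarks on the $O(\epsilon)$ bookkeeping (the eigenvectors being only leading-order, and the non-blow-up of the $|\sin\theta|$ denominator) are sound and merely make explicit what the paper leaves implicit.
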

\begin{proof}
    By directly, we get the first formula. Combining Eq.\,(\ref{eq: initial state}) with Lemma \ref{lem: eigenpair of U}, we see
    \begin{align*}
        |\braket{\psi_{\pm \theta}|\psi_0}|=\frac{1}{\sqrt{2}}\sqrt{\frac{|A_2|}{|A_1|+|A_2|}}\times \left| \frac{1-e^{i\theta(\epsilon)}}{\sin \theta(\epsilon)}\right|.
    \end{align*}
    Thus, we have
    \begin{align*}
        \left| \frac{1-e^{\pm i\theta(\epsilon)}}{\sin \theta(\epsilon)}\right|
        &= \left|\frac{e^{\pm i\theta(\epsilon)/2}(e^{\pm i \theta(\epsilon)/2}-e^{\mp i \theta(\epsilon)/2})}{\sin \theta(\epsilon)} \right| \\
        &= \left|\frac{2i \sin (\theta(\epsilon)/2)}{\sin \theta(\epsilon)}\right| \\
        &= \left|\frac{1}{\cos (\theta(\epsilon)/2)} \right| \\
        &= 1+O(\epsilon).
    \end{align*}
    Therefore, we get the desired conclusion.
\end{proof}
From Lemma \ref{lem: overlap}, $\mu_{t}(H_1)$ is asymptotically described as
\begin{align*}
    &\mu_{t}(H_1) \\&=\sum_{a \in A_1}\left|\braket{a|\psi_1}\braket{\psi_1|\psi_0}+e^{it\theta(\epsilon)}\braket{a|\psi_{\theta}} \braket{\psi_{\theta}|\psi_0} +e^{-it\theta(\epsilon)}\braket{a|\psi_{-\theta}}\braket{\psi_{-\theta}|\psi_0}\right|^2+O(\epsilon).
\end{align*}
For any $a \in A_1$, the first term of above equation is denoted by
\begin{align*}
\braket{a|\psi_1}\braket{\psi_1|\psi_0}&=
    \frac{1}{\sqrt{|A_1|+|A_2|}}\times \sqrt{\frac{|A_1|}{|A_1|+|A_2|}} \\
    & =\frac{1}{\sqrt{|A_1|}}\frac{|A_1|}{|A_1|+|A_2|}.  \\
\end{align*}
The sum of second and third term of above equation is given by 
\begin{align*}
    &e^{it\theta(\epsilon)}\braket{a|\psi_{\theta}} \braket{\psi_{\theta}|\psi_0} +e^{-it\theta(\epsilon)}\braket{a|\psi_{-\theta}}\braket{\psi_{-\theta}|\psi_0} \\
    &=-\frac{1}{2\sqrt{|A_1|} \sin^2 \theta(\epsilon)}\frac{|A_2|}{|A_1|+|A_2|} \left(e^{it \theta(\epsilon)}(1-e^{i \theta(\epsilon)})^2+e^{-it \theta(\epsilon)}(1-e^{-i \theta(\epsilon)})^2\right) \\
    &=-\frac{1}{2\sqrt{|A_1|} \sin^2 \theta(\epsilon)}\frac{|A_2|}{|A_1|+|A_2|}\times (-8 \sin^2 (\theta(\epsilon)/2)\cos(t\theta(\epsilon))) \\
    &= \frac{1}{\sqrt{|A_1|}}\frac{|A_2|}{|A_1|+|A_2|}\cos(t \theta(\epsilon))+O(\epsilon).
\end{align*}
Thus, we show
\begin{align*}
    \mu_{t}(H_1)&=\sum_{a \in A_1}\left|\frac{1}{\sqrt{|A_1|}}\frac{|A_1|}{|A_1|+|A_2|}+\frac{1}{\sqrt{|A_1|}}\frac{|A_2|}{|A_1|+|A_2|}\, \cos(t\,\theta(\epsilon))\right|^2+O(\epsilon) \\
    &=\left(\frac{|A_1|+|A_2|\cos (t \theta(\epsilon))}{|A_1|+|A_2|}\right)^2+O(\epsilon).
\end{align*}
Similarly, we have the asymptotic behavior of $\mu_t(H_2)$.
Therefore, we get the desired result.

\section{Summary and discussion}
This paper \red{investigated} a phenomenon called pulsation, inspired by quantum search algorithms and perfect state transfer. We considered \red{a perturbed} Grover walk on a graph formed by connecting two arbitrary graphs via a single edge, referred to as a bridge. A parameter $\epsilon > 0$, representing the strength of connectivity, is assigned to the bridge.
\red{Since the time evolution operator of the perturbed Grover walk is uniquely determined by only the underlying graph, we could focus on the graph structures influence to the pulsation and found the following properties. }
We showed that for sufficiently small $\epsilon$, pulsation \red{occurs} in this setting: that is, the quantum walker periodically transfers between the two graphs. We derived an asymptotic expression for the probability of finding the walker on either graph \red{for $\epsilon \ll 1$}. These results revealed that the behavior of pulsation is determined not by \red{detailed adjacency information of each graph and also the place of the bridge between them}, but solely by the number of arcs in each \red{(Theorem \ref{thm: prob})}.
Furthermore, we demonstrated that the quantum walker goes back and forth between the two graphs, with a period of order $O(\epsilon^{-1/2})$ \red{(Theorem \ref{thm: periodicity})}.
\red{We discovered a universal property of the Grover walk, which is the pulsation. }
\red{However}, in this study, we focused on the case of a composed graph consisting of two graphs connected by one edge. 
\red{Several compelling avenues for further investigation emerge from this work, including the extension of our model to multiple weak bridges or more complex graph compositions.}
\red{Moreover, we have a lot of future's works connecting to the related topics by extending our model.} 

We guess that since {\it a \red{weak} bridge} might be regarded as the potential barrier for \red{small $\epsilon$}, this phenomenon may be interpreted as a model that handles the phenomenon known as a kind of the tunneling effect \cite{B,DH} in quantum mechanics using quantum walks. \red{We need to find its explicit connection in future's work. }

\red{We expect that our model} may be \red{also applied to} an analogy of quantum batteries \cite{AF}. One of the objectives of those studies is to extract the energy from the batteries efficiently by using unitary operators. Whereas, this paper showed that the energy (quantum walker) was extracted from $H_1$ to $H_2$ by $U(\epsilon)$. We conceive that the results of this paper may be one of the directions of quantum batteries and quantum walks on finite graphs.


\red{
In contrast to previous studies \cite{CWZ,Chiral}, where probability transfer was achieved by modulating the time-reversal symmetry, the present study maintains the time-reversal symmetry of the evolution matrix $U(\epsilon)$. Instead, we focus on a quantum walk where the initial state is localized solely on one side $H_{1}$. This breaks the symmetry about the initial state and may be seen as one factor enabling probability transport. However, since the dynamics of the quantum walk become excessively complex in the absence of a weak bridge (the $\epsilon=1$ case), the introduction of this bridge is considered another essential factor. A more detailed investigation into the interplay between various symmetries and the pulsation remains a subject for future work.
}

\red{
It is known that several connections between quantum walks and electric circuits have already been established. For example, previous researches \cite{Ecircuit, PCSZ} has shown that some class of quantum walks can be realized based on the electrical circuits, both theoretically and experimentally. Furthermore, the papers \cite{HSS1, HS2} have already shown that Grover walk on graphs with tails can be described by circuit equations which are extended electrical circuit equations. Our paper has not yet explicitly revealed such relationships between the circuits equation and quantum walk. However, we think that extending the analysis to cases with multiple weak edges may have a connection between circuit equations and the behavior of quantum walks. These relationships could potentially suggest that some phenomena, such as state transfer or quantum search in quantum walks, might be achievable using classical electrical circuits. Clarifying the connection between the electric circuit and this work is one of the direction of the future works.
}

\section*{Acknowledgments}
E.S. acknowledges financial supports from the Grant-in-Aid of Scientific Research (C) Japan Society for the Promotion of Science (Grant No. 24K06863).
The authors have no competing interests to declare that are relevant to the content of this article. All data generated or analyzed during this study are included in this published article.
\bibliographystyle{plain}
\bibliography{ref}

@article{IKS,
	AUTHOR = {N. Inui and N. Konno and E. Segawa},
	TITLE = {One-dimensional three-state quantum walk},
	JOURNAL = {Physical Review E},
    VOLUEME= {72},
    YEAR = {2005},
    Number = {5},
	PAGES = {056112},
}

@article{IS,
	AUTHOR = {N. Inui and N. Konno},
	TITLE = {Localization of multi-state quantum walk in one dimension},
	JOURNAL = {Physica A: Statistical Mechanics and its Applications},
    VOLUEME= {353},
    YEAR = {2005},
    Number = {1},
	PAGES = {133-144},
}

@article{ILS,
	AUTHOR = {N. Konno and T. Luczak and E. Segawa},
	TITLE = {Limit measures of inhomogeneous discrete-time quantum walks in one dimension},
	JOURNAL = {Quantum Information Processing},
    VOLUEME= {12},
    YEAR = {2013},
    Number = {1},
	PAGES = {33-53},
}

@article{PB,
	AUTHOR = {A. Panda and C. Benjamin},
	TITLE = {Order from chaos in quantum walks on cyclic graphs},
	JOURNAL = {Physical Review A},
    VOLUEME= {104},
    YEAR = {2021},
    Number = {},
	PAGES = {012204},
}

@article{HKSSp,
	AUTHOR = {Yu. Higuchi and N. Konno and I. Sato and E. Segawa},
	TITLE = {Periodicity of the discrete-time quantum walk on a finite graph},
	JOURNAL = {Interdisciplinary Information Sciences},
    VOLUEME= {23},
    YEAR = {2017},
    Number = {},
	PAGES = {75-86},
}

@article{KSTY,
	AUTHOR = {S. Kubota and E. Segawa and T. Taniguchi and Y. Yoshie},
	TITLE = {Periodicity of Grover walks on generalized Bethe trees},
	JOURNAL = {Linear Algebra and its Application},
    VOLUEME= {554},
    YEAR = {2018},
    Number = {},
	PAGES = {371-391},
}

@article{Gpst1,
	AUTHOR = {C. Godsil},
	TITLE = {State transfer on graphs},
	JOURNAL = {Discrete Math},
    VOLUEME= {312},
    YEAR = {2012},
    Number = {1},
	PAGES = {129-147},
}

@article{Gpst2,
	AUTHOR = {C. Godsil},
	TITLE = {When can perfect state transfer occur?},
	JOURNAL = {The Electronic Journal of Linear Algebra},
    VOLUEME= {23},
    YEAR = {2012},
    Number = {},
	PAGES = {877–890},
}

@article{KY,
	AUTHOR = {S. Kubota and K. Yoshino},
	TITLE = {Circulant graphs with valency up to 4 that admit perfect state transfer in Grover walks},
	JOURNAL = {Journal of Combinatorial Theory, Series A},
    VOLUEME= {216},
    YEAR = {2025},
    Number = {106064},
	PAGES = {},
}

@article{KS,
	AUTHOR = {S. Kubota, E. Segawa},
	TITLE = {Perfect state transfer in Grover walks between states associated to vertices of a graph},
	JOURNAL = {Linear Algebra and its Applications},
    VOLUEME= {646},
    YEAR = {2022},
    Number = {},
	PAGES = {238-251},
}

@article{SSstate,
	AUTHOR = {M. \v{S}tefa\v{n}\'{a}k, S. Skoup\'{y}},
	TITLE = {Quantum walk state transfer on a hypercube},
	JOURNAL = {Physica Scripta},
    VOLUEME= {98},
    YEAR = {2023},
    Number = {104003},
	PAGES = {},
}

@article {G,
	AUTHOR = {L. Grover},
	TITLE = {A fast quantum search mechanical algorithm for database search},
	JOURNAL = {Proceedings of the 28th annual ACM symposium on theory of computing},
    VOLUEME= {},
    YEAR = {1996},
    Number = {},
	PAGES = {212–219},
}

@article {Am1,
	AUTHOR = {Ambainis, A. and Bach, E. and Nayak, A. and Vishwanath, A. and  Watrous, J.},
	TITLE = {One-dimensional quantum walks},
	JOURNAL = {Proc. 33rd Annual ACM Symp. Theory of Computing},
    VOLUEME= {},
    YEAR = {2001},
    Number = {},
	PAGES = {37--49},
}

@book {Portugal,
	AUTHOR = {Portugal, R.},
	TITLE = {Quantum Walk and Search Algorithm, 2nd Ed.},
	PUBLISHER = {Springer Nature Switzerland},
	YEAR = {2018},
}

@article{Aed,
	AUTHOR = {A. Ambainis},
	TITLE = {Quantum walk algorithm for element distinctness},
	JOURNAL = {SIAM Journal on Computing},
	VOLUME = {37},
	YEAR = {2007},
	NUMBER = {1},
	PAGES = {210-239},
}

@article {QCcoupling2,
	AUTHOR = {Ambainis, A. and Gilyén, A. and  Jeffery, J. and Kokainis,M.},
	TITLE = {Quadratic speedup for finding marked vertices by quantum walks},
	JOURNAL = {Proceedings of the 52nd Annual ACM SIGACT Symposium on Theory of Computing },
	VOLUME = {19},
	YEAR = {2020},
	NUMBER = {},
	PAGES = {412-424},
}

@article {ADZ,
	AUTHOR = {Y. Aharonov and L. Davidovich and N. Zagury.},
	TITLE = {Quantum random walks},
	JOURNAL = {Physical Review A},
	VOLUME = {48},
	YEAR = {1993},
	NUMBER = {2},
	PAGES = {1690-1993},
}

@article {HSS1,
	AUTHOR = {Higuchi, Yu and Sabri, M and Segawa, E.},
	TITLE = {Electric circuit induced by quantum walk},
	JOURNAL = {Journal of Statistical Physics},
	VOLUME = {181},
	YEAR = {2020},
	NUMBER = {},
	PAGES = {603-617},
}

@article {HS2,
	AUTHOR = {Higuchi, Yu. and Segawa, E.},
	TITLE = {Circuit equation of {G}rover walk},
	JOURNAL = {Annales Henri Poincaré},
	VOLUME = {25},
	YEAR = {2024},
	NUMBER = {},
	PAGES = {3739-3777},
}

@article{Ecircuit,
    author = {M. Ezawa},
    title = {Electric-circuit simulation of the Schrödinger equation and non-Hermitian quantum walks},
    journal = {Physical Review B},
    VOLUME = {100},
	YEAR = {2019},
	NUMBER = {},
	PAGES = {165419},
}

@article{PCSZ,
    author = {N. Pan and T. Chen and H. Sun and X. Zhang},
    title = {Electric-Circuit realization of fast quantum search},
    journal = {Research},
    VOLUME = {2021},
	YEAR = {2021},
	NUMBER = {},
	PAGES = {9793071},
}

@article {AF,
	AUTHOR = {R. Alicki and M. Fannes},
	TITLE = {Entanglement boost for extractable work from ensembles of quantum batteries},
JOURNAL = {Physical Review E},
    VOLUME= {87},
    YEAR = {2013},
    Number = {},
	PAGES = {},
}

@book {Kato,
	AUTHOR = {Kato, T.},
	TITLE = { A Short Introduction to Perturbation Theory for Linear Operator},
    PUBLISHER = {Springer-Verlag, New York},
    YEAR = {1982},
}

@article {QS,
	AUTHOR = {F. D. Nicola and L. Sansoni and A. Crespiand R. Ramponi and R. Osellame and V. Giovannetti and R. Fazioand P. Mataloni and F. Sciarrino},
	TITLE = {Quantum simulation of bosonic-fermionic noninteracting particles in disordered
systems via a quantum walk},
    JOURNAL = {Physical Review A},
    VOLUME= {89},
    YEAR = {2014},
    Number = {3},
	PAGES = {032322},
}

@article {QC,
	AUTHOR = {C. Vlachou  and J. Rodrigues and P. Mateus and N. Paunkovi´c and A. Souto},
	TITLE = {Quantum walk public-key cryptographic system},
    JOURNAL = {International Journal of Quantum Information},
    VOLUME= {13},
    YEAR = {2015},
    Number = {7},
	PAGES = {1550050},
}

@article {HKSS,
	AUTHOR = {Higuchi, Yu. and Konno, N. and Sato, I. and Segawa,E.},
	TITLE = {Spectral and asymptotic properties of Grover walks
on crystal lattices},
    JOURNAL = {Journal of Functional Analysis},
    VOLUME= {267},
    YEAR = {2014},
    Number = {},
	PAGES = {4197-4235},
}

@article {TS,
	AUTHOR = {T. Hosaka and E. Segawa},
	TITLE = {Pulsation of quantum walk on {J}ohnson graph},
    JOURNAL = {Quantum Information Processing},
    VOLUME= {24},
    YEAR = {2025},
    Number = {372},
	PAGES = {},
}

@article{DH,
    author = {M. Davis and E. Heller},
    title = {Quantum dynamical tunneling in bound states},
    journal = {The Journal of Chemical Physics},
    VOLUME= {75},
    YEAR = {1981},
    Number = {},
	PAGES = {246},
}

@book{B,
    author = {D. Bohm},
    title = {Quantum Theory. Dover Books on Physics Series},
    publisher = {Dover Publications},
    year = {1951}
}

@article {K,
	AUTHOR = {N. Konno},
	TITLE = {Quantum random walks in one dimension},
    JOURNAL = {Quantum Information Processing},
    VOLUME= {1},
    YEAR = {2002},
    Number = {},
	PAGES = {345-354},
}

@article{Chiral,
    author = {D. Lu and J. D. Biamonte and J. Li and H. Li and T. H. Johnson and V. Bergholm and M. Faccin and Z. Zimborás and R. Laflamme and J. Baugh and S. Lloyd},
    title = {Chiral quantum walks},
    journal = {Physical Review A},
    VOLUME= {93},
    YEAR = {2016},
    Number = {042302},
}

@article{CWZ,
    author = {T. Chen and B. Wang and X. Zhang},
    title = {Controlling probability transfer in the discrete-time quantum walk by modulating the symmetries},
    journal = {New Journal of Physics},
    VOLUME= {19},
    YEAR = {2017},
    Number = {113049},
}
\end{document}